\newtheorem{prop}{Proposition}
\newtheorem{theo}{Theorem}
\newtheorem{conj}{Conjecture}
\newtheorem{claim}{Claim}
\newtheorem{prob}{Problem}
\newenvironment{proof}{\noindent \textbf{Proof}}{\rule{2mm}{2mm}}
\title{On packing time-respecting arborescences}
\author{Romain Chapoulli\'e and Zolt\'an Szigeti, \ \ \ \ \ \ \\
Ensimag and GSCOP, \\Grenoble}
\date{}
\begin{document}

\maketitle

\begin{abstract} 
We present a slight generalization of the  result of Kamiyama and Kawase \cite{kamkaw} on packing time-respecting arborescences in acyclic pre-flow temporal networks. 
Our main contribution is to provide the first results on packing time-respecting arborescences in non-acyclic temporal networks.
As negative results, we prove the NP-completeness of the decision problem of the existence of 2 arc-disjoint spanning time-respecting arborescences and of a related problem  proposed in this paper.
\end{abstract}

\section{Introduction}

Temporal networks were introduced to model the exchange of information in a network or the spread of a disease in a population. We are given a directed graph $D$ and a time label function $\tau$ on the arcs of $D,$ the pair $(D,\tau)$ is called a temporal network. Intuitively, for an arc $a$ of $D$, $\tau(a)$ is the time when the end-vertices of $a$ communicate, that is when the tail of $a$ can transmit a piece of information to the head of $a.$ Then the information can propagate through a path $P$ if it is time-respecting, meaning that the time labels of the arcs of $P$ in the order they are passed are non-decreasing. For a nice introduction to temporal networks, see \cite{keklku}. 
\medskip

Problems about packing arborescences in temporal networks were investigated in \cite{kamkaw}. An arborescence is called time-respecting if all the directed paths it contains are time-respecting. The main result of \cite{kamkaw} provides a packing of time-respecting arborescences, each vertex belonging to many of them, if the network is pre-flow and acyclic. Here pre-flow means intuitively that each vertex different from the root has at least as many arcs entering as leaving, while acyclic means that no directed cycle exists. Kamiyama and Kawase  \cite{kamkaw} presented examples to show that these conditions can not be dropped. 
\medskip

Two  questions naturally arise from these results: 
Must all kinds of directed cycles be forbidden? 
Does high time-respecting root-connectivity imply the existence  of 2 arc-disjoint spanning time-respecting  arborescences in a non-pre-flow temporal network?
\medskip

Let us now present our contributions that give an answer to those questions. 
\medskip

We first propose a generalized version of the result of \cite{kamkaw}  with a simplified proof in Theorem \ref{theotauarb}.  
\medskip

Our main result, Theorem \ref{cyclic}, is about packing time-respecting arborescences in pre-flow temporal networks that may contain directed cycles. The condition in Theorem \ref{cyclic} is that the arcs in the same strongly connected component must  have the same $\tau$-value. 
If this condition holds then our intuition would be to use regular arborescences in the strongly
connected components and then to try to extend them to obtain a packing of time-respecting
arborescences in the temporal network. This idea is a step in the right direction, however the
exact process used in the proof is a bit more complex, see Section \ref{nonac}.
%If this condition holds then one may hope that in the strongly connected components one may use  packings of usual arborescences and then maybe they can be extended to a packing of time-respecting arborescences in the temporal network. We could have executed this plan successfully, however the proof is more complicated as we had thought, see Section \ref{nonac}.  
\medskip

By the famous result of Edmonds  \cite{ed}, we know that $k$-root-connectivity  implies the existence of a packing of $k$ spanning $s$-arborescences. The authors of \cite{keklku} show that for any positive integer $k$, time-respecting $k$-root-connectivity does not imply the existence of 2 arc-disjoint spanning time-respecting arborescences in a temporal network. To explain this  construction (or more precisely, a slightly modified version of it), we point out and recall in Section \ref{adstra} the close relation between packings of spanning time-respecting arborescences, packings of Steiner arborescences and proper 2-colorings of hypergraphs. We remark in Theorem \ref{SPNP} that the decision problem, whether there exist  2 arc-disjoint spanning time-respecting arborescences, is NP-complete.
\medskip

We show in Theorem \ref{newresult} that time-respecting $(n-1)$-root-connectivity implies the existence of a packing of $2$ spanning time-respecting $s$-arborescences in an arbitrary temporal network on $n$ vertices. This result becomes more interesting if we note that the examples of Figure \ref{circuitexists1} show that time-respecting $(n-3)$-root-connectivity is not enough.
\medskip

Finally, in Theorem \ref{NPC}, we  show that in an acyclic temporal network $(D,\tau)$, it is NP-complete to decide whether there exists a spanning arborescence whose   directed paths consist of arcs of the same $\tau$-value.

\section{Definitions}

Let {\boldmath $D$} $=(V\cup s,A)$ be a directed graph with a special vertex {\boldmath$s$}, called {\it root}, such that no arc enters $s$.
The set of arcs entering, leaving a vertex set $X$ of $D$ is denoted by {\boldmath $\rho_D(X)$}, {\boldmath $\delta_D(X)$}, respectively. Sometimes we use $\rho_A(X)$ for $\rho_D(X)$ and similarly $\delta_A(X)$ for $\delta_D(X)$. We denote $|\rho_D(X)|$ and $|\delta_D(X)|$ by {\boldmath$d^-_D(X)$} and {\boldmath$d^+_D(X)$}, respectively.
We call the directed graph $D$  {\it acyclic} if $D$ contains no directed cycle. 
If $d^-_D(v)=d^+_D(v)$ for all $v\in V$, then $D$ is called {\it Eulerian}.
We say that  $D$ is {\it pre-flow}  if 	$d^-_D(v)\ge d^+_D(v)$ for all $v\in V$. 
A subgraph $F=(V'\cup s,A')$ of $D$ is called an {\it $s$-arborescence} if $F$ is acyclic and $d^-_F(v)=1$ for all $v\in V'.$
We say that $F$ is {\it spanning} if $V'=V.$ For $U\subseteq V$,  $F$  is called a {\it Steiner $s$-arborescence} or an {\it $(s,U)$-arborescence} if $F$ is an $s$-arborescence and it contains all the vertices in $U.$ 
A {\it packing} of arborescences means a set of arc-disjoint arborescences. 
For $v\in V,$  a path from $s$ to $v$ is called an {\it $(s,v)$-path} and {\boldmath $\lambda_D(s,v)$} denotes the maximum number of arc-disjoint $(s,v)$-paths in $D.$ For some $k\in\mathbb{N}$, we say that $D$ is {\it $k$-root-connected} if $\lambda_D(s,v)\ge k$ for all $v\in V.$ For some $U\subseteq V$ and $k\in\mathbb{N}$, we say that $D$ is {\it Steiner $k$-root-connected} if $\lambda_D(s,v)\ge k$ for all $v\in U.$ 
We call a directed graph $D'=(V\cup \{s,t\},A')$ {\it almost Eulerian} if $d^-_{D'}(v)=d^+_{D'}(v)$ for all $v\in V$ and  $d^-_{D'}(s)=0=d^+_{D'}(t).$
\medskip

For a function {\boldmath $\tau$} $: A\rightarrow \mathbb{N}$,  {\boldmath $N$} $=(D,\tau)$ is called a {\it temporal network}.
For $i\in \mathbb{N}$, let 
{\boldmath $\rho_N^i(v)$}$:=\{a\in \rho_D(v):\tau(a)\le i\}$ and 
{\boldmath $\delta_N^i(v)$}$:=\{a\in \delta_D(v):\tau(a)\le i\}$. 
We call the temporal network $N$  {\it acyclic} if $D$ is acyclic. We say that $N$ is {\it pre-flow}  if $|\rho_N^i(v)|\ge |\delta_N^i(v)|$ for all $i\in \mathbb{N}$ and for all  $v\in V.$ 
Note that if a temporal network $(D,\tau)$ is pre-flow, then the directed graph $D$ is pre-flow. 
We say that $(D,\tau)$ is {\it consistent} if arcs of different $\tau$-values cannot belong to the same  strongly connected component of $D$. In this case in each strongly connected component $Q$ of $D$ that contains at least one arc, each arc has the same $\tau$-value, that we denote by {\boldmath $\tau(Q)$}.
A directed path $P$ of $D$, consisting of the arcs $a_1,\dots, a_\ell$ in this order, is called {\it time-respecting} or {\it $\tau$-respecting} if $\tau (a_i)\le \tau(a_{i+1})$ for $1\le i\le \ell-1.$ An $s$-arborescence $F$ of $D$ is called {\it time-respecting} or {\it $\tau$-respecting} if for every vertex $v$ of $F$, the unique $(s,v)$-path in $F$ is $\tau$-respecting. For $v\in V,$ {\boldmath $\lambda_N(s,v)$} denotes the maximum number of arc-disjoint $\tau$-respecting $(s,v)$-paths in $D.$ We say that $N$ is {\it time-respecting $k$-root-connected} if $\lambda_N(s,v)\ge k$ for all $v\in V.$ If $N'=(D',\tau')$ is a temporal network where $D'=(V\cup \{s,t\},A')$ is  almost Eulerian, then for a vertex $v\in V,$ we call a bijection $\mu'_v$ from $\delta_{D'}(v)$ to $\rho_{D'}(v)$  {\it $\tau'$-respecting} if   $\tau'(\mu'_v(f))\le \tau'(f)$ for all $f\in \delta_{D'}(v).$
\medskip

A hypergraph {\boldmath$\mathcal{H}$} $=(V,\mathcal{E})$ is defined by its vertex set $V$ and its hyperedge set $\mathcal{E}$ where a hyperedge is a subset of $V.$ For some $r\in \mathbb{N},$ the hypergraph $\mathcal{H}$ is called {\it $r$-uniform} if each hyperedge in $\mathcal{E}$ is of size $r$ and {\it $r$-regular} if each vertex in $V$ belongs to exactly $r$ hyperedges. 
A $2$-coloring of the vertex set $V$ is called {\it proper} if each hyperedge in $\mathcal{E}$ contains vertices of both colors, in other words no monochromatic hyperedge exists in $\mathcal{E}.$ We call $\mathcal{E}'\subseteq\mathcal{E}$ an {\it exact cover} of $\mathcal{H}$ if each vertex in $V$ belongs to exactly one hyperedge in $\mathcal{E}'.$

\section{Packing time-respecting arborescences in acyclic pre-flow temporal networks}

The aim of this section is to generalize the following result of Kamiyama and Kawase \cite{kamkaw} on packing time-respecting arborescences in acyclic pre-flow temporal networks.

\begin{theo}[\cite{kamkaw}]\label{kaka}
Let $N=((V\cup s, A),\tau)$ be an acyclic pre-flow temporal network and  $k\in \mathbb{N}.$
There exists a packing of $k$ $\tau$-respecting $s$-arborescences such that each vertex $v$ in $V$ belongs to  $\min\{k,\lambda_N(s,v)\}$ of them.
\end{theo}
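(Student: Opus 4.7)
The plan is to extend $N$ to an almost Eulerian temporal network and then assemble the arborescences from the backward traces afforded by the $\tau'$-respecting bijections. Concretely, I would add a new sink vertex $t$ to $D$ and, for each $v \in V$, insert $d^-_D(v) - d^+_D(v)$ fresh arcs from $v$ to $t$, each carrying a $\tau'$-value strictly larger than every label of $\tau$. The resulting temporal network $N' = (D', \tau')$ is almost Eulerian, and the pre-flow hypothesis survives in the form $|\rho_{N'}^i(v)| \geq |\delta_{N'}^i(v)|$ for every $v \in V$ and every $i \in \mathbb{N}$. Applying Hall's theorem at each $v \in V$ to the bipartite graph in which an outgoing arc is matched with an incoming arc of no larger $\tau'$-value, we obtain a $\tau'$-respecting bijection $\mu'_v \colon \delta_{D'}(v) \to \rho_{D'}(v)$.

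The bijections supply, for every arc $g$ of $D'$ whose head lies in $V$, a recursive backward trace $P(g)$: set $P(g) = P(\mu'_u(g)) \cdot g$ when the tail $u$ of $g$ is in $V$, and $P(g) = (g)$ when the tail is $s$. Since $D$ is acyclic the recursion terminates, and since each $\mu'_u$ is $\tau'$-respecting the path $P(g)$ is $\tau$-respecting and uses only arcs of $D$ (the arcs to $t$ all have head outside $V$). The injectivity of each $\mu'_v$ yields the key structural fact that any two backward traces sharing an arc must coincide from that arc onwards via $\mu'^{-1}$; in particular, the backward traces of two distinct arcs with the same head in $V$ are arc-disjoint, so at each $v \in V$ there are at least $d^-_D(v)$ pairwise arc-disjoint $\tau$-respecting $(s,v)$-paths.

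The arborescences are built by selecting, for each $v \in V$, $r_v := \min\{k, \lambda_N(s, v)\}$ incoming arcs of $v$ in $\rho_D(v)$ and labelling them bijectively by $\{1, \dots, r_v\}$; the arborescence $T_j$ is then defined as the set of arcs labelled $j$, taken over all $v$ with $r_v \geq j$. Arc-disjointness across $j$ is automatic from the labelling, but for each $T_j$ to be a $\tau$-respecting $s$-arborescence the labelling must be \emph{coherent} along backward traces: whenever the backward trace of the arc labelled $j$ at $v$ passes through an ancestor $u \in V$ with $r_u \geq j$, the arc of this trace entering $u$ must be the one labelled $j$ at $u$. The main obstacle, and the heart of the argument, is to produce such a coherent labelling; I expect to handle it by processing $V$ in a topological order of $D$ and using, at each $v$, the $d^-_D(v) \geq \lambda_N(s, v) \geq r_v$ pairwise arc-disjoint backward traces to show that the greedy choice at $v$ can always be made consistent with the commitments already fixed at the previously processed ancestors.
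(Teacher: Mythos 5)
Your first part is sound: the almost-Eulerian extension, the existence of the $\tau'$-respecting bijections $\mu'_v$ (Hall's condition at $v$ is exactly the pre-flow inequality $|\rho^i_{N'}(v)|\ge|\delta^i_{N'}(v)|$), and the pairwise arc-disjointness of the backward traces, which indeed yields $\lambda_N(s,v)=d^-_D(v)$. But the proof stops exactly where the theorem lives: you acknowledge that producing the coherent labelling is ``the heart of the argument'' and then only announce a plan for it. Moreover, the scheme you set up cannot work as stated. Labelling the chosen arcs of $\rho_D(v)$ bijectively by the \emph{initial segment} $\{1,\dots,r_v\}$ is already inconsistent with $T_j$ being an arborescence: take $k=2$ and the network with arcs $su_1$, $u_1v$, $su_2$, $u_2v$, all of the same $\tau$-value. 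Then $r_{u_1}=r_{u_2}=1$, so both $su_1$ and $su_2$ receive label $1$, while one of $u_1v,u_2v$ must receive label $2$; the resulting $T_2$ consists of a single arc whose tail is a non-root vertex of in-degree $0$ in $T_2$. What is actually needed is to assign to each $v$ a size-$r_v$ \emph{subset} of $\{1,\dots,k\}$, whose forced part is dictated by the arcs leaving $v$ that the arborescences already use. Your coherence condition is also silent about trace vertices $u$ with $r_u<j$, which is a second way $T_j$ loses the arborescence property.

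The deeper problem is the processing order. Deciding which arborescences must contain $v$, and which incoming arc each of them should get so that the paths stay time-respecting, requires knowing which arborescences contain an arc leaving $v$ and the smallest $\tau$-value of such an arc in each of them; in a source-to-sink sweep none of this is known when $v$ is reached, so the commitment at $v$ (which incoming arc goes to which index) cannot be made compatible with demands that only appear later, and an incoming arc of small $\tau$-value can end up assigned to the wrong arborescence. The paper's algorithm sweeps in the \emph{reverse} topological order (sinks first) precisely so that the set $I$ of arborescences already leaving $v$ is known when $v$ is processed; the pre-flow counting $|I|\le\min\{k,d^+_A(v)\}\le\min\{k,d^-_A(v)\}$ shows all of them can be served, and matching them to the smallest-$\tau$ incoming arcs in sorted order gives the time-respecting property by a purely local argument, with no globally predetermined traces $P(g)$ at all. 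Your setup (extension to an almost Eulerian network plus the bijections $\mu'_v$) is in fact closer to the paper's proof of Theorem \ref{cyclic} than to its proof of this statement, and it is a reasonable starting point, but the decisive step is missing and the announced plan for it needs to be inverted and substantially reworked.
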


Note that Theorem \ref{kaka} implies that in a time-respecting $k$-root-connected acyclic pre-flow temporal network there exists a packing of $k$ spanning time-respecting $s$-arborescences.
\medskip

 We now present our first result, a slight extension of Theorem \ref{kaka}.

\begin{theo}\label{theotauarb}
Let $N=((V\cup s, A),\tau)$ be an acyclic temporal network and $k\in \mathbb{N}$ such that  
\begin{equation}\label{ourcond}
	\min\{k,|\rho_N^i(v)|\}\ge \min\{k,|\delta_N^i(v)|\} \ \ \ \text{ for all } i\in \mathbb{N}, \text{ for all }  v\in V. 
\end{equation}
There exists a packing of $k$ $\tau$-respecting $s$-arborescences  such that each vertex $v$ in $V$ belongs to  $\min\{k,d^-_A(v)\}$ of them.
\end{theo}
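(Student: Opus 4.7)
The plan is to proceed by induction on $k$. The base case $k=0$ is trivial. For the inductive step, assume the theorem for $k-1$; we aim to construct a single $\tau$-respecting $s$-arborescence $F$ in $N$ such that (i) $F$ contains every $v\in V$ with $d_A^-(v)\geq k$, and (ii) the network $N\setminus F$ obtained by deleting the arcs of $F$ still satisfies (\ref{ourcond}) with parameter $k-1$. The inductive hypothesis applied to $N\setminus F$ then yields $k-1$ further $\tau$-respecting arborescences, and a short case analysis on whether $d_A^-(v)\geq k$ or $d_A^-(v)<k$ (and whether $v\in F$ in the latter case) verifies that the combined packing of $k$ arborescences has every $v\in V$ lying in exactly $\min\{k,d_A^-(v)\}$ of them.

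The key observation is that (\ref{ourcond}) with parameter $k$ is equivalent to: for every $v\in V$ and every $i\in\mathbb{N}$, either $|\rho_N^i(v)|\geq k$ or $|\rho_N^i(v)|\geq|\delta_N^i(v)|$. Consequently, (\ref{ourcond}) with parameter $k-1$ can fail at a pair $(v,i)$ in $N\setminus F$ only at a \emph{tight} pair with $|\rho_N^i(v)|=|\delta_N^i(v)|\leq k-1$, and only when $F$ has an arc entering $v$ of $\tau$-value at most $i$ while having no arc leaving $v$ of $\tau$-value at most $i$. Building $F$ therefore reduces to the following selection problem: for each chosen vertex $v$, pick an entering arc whose $\tau$-value is suitably coupled with the $\tau$-values of the $F$-arcs leaving $v$.

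Using the acyclicity of $N$, we build $F$ by processing vertices in reverse topological order. When $v$ is committed to $F$, the $F$-arcs leaving $v$ are already fixed, so the minimum $\tau$-value $\tau^+(v)$ of those arcs (with $\tau^+(v)=+\infty$ if $v$ is a leaf of $F$) is known. Letting $M_v$ denote the largest tight time of $v$ strictly below $\tau^+(v)$ (or $-\infty$ if no such tight time exists), we seek an entering arc $a_v\in\rho_A(v)$ with $\tau(a_v)\in(M_v,\tau^+(v)]$; if the tail of $a_v$ is not yet in $F$, it is added and the construction recurses, terminating at $s$ by acyclicity. The main obstacle is to show that the required arc $a_v$ always exists, and this is exactly where condition (\ref{ourcond}) is used: at $i=M_v$ we have $|\rho_N^{M_v}(v)|=|\delta_N^{M_v}(v)|\leq k-1$, and the presence of a leaving arc of $v$ at time $\tau^+(v)$ together with (\ref{ourcond}) at the intermediate times forces $|\rho_N^{\tau^+(v)}(v)|>|\rho_N^{M_v}(v)|$, producing a new entering arc with $\tau$-value in the required interval. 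For vertices $v\in V$ with $d_A^-(v)\geq k$ that would otherwise remain uncovered, the same inequality combined with $d_A^-(v)\geq k>|\rho_N^{M_v}(v)|$ ensures a valid entering arc exists, securing property (i).
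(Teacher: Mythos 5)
Your argument is correct, but it takes a genuinely different route from the paper's. The paper constructs all $k$ arborescences simultaneously in a single reverse-topological pass: at each vertex $v_j$ it takes the $\min\{k,d^-_A(v_j)\}$ entering arcs of smallest $\tau$-value and distributes them among the arborescences by matching sorted orders (the arborescence whose cheapest arc leaving $v_j$ has smallest $\tau$-value receives the cheapest entering arc), and then proves the time-respecting property by a direct counting contradiction from \eqref{ourcond}; no arcs are ever deleted, so no preservation statement is needed. You instead peel off one arborescence at a time by induction on $k$, which obliges you to prove an invariant the paper never touches, namely that \eqref{ourcond} with parameter $k-1$ survives the deletion of $F$. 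Your analysis of where this can fail (tight pairs $|\rho_N^i(v)|=|\delta_N^i(v)|\le k-1$ with $i<\tau^+(v)$) and the resulting requirement $\tau(a_v)\in(M_v,\tau^+(v)]$ are correct, and your existence argument -- comparing \eqref{ourcond} at $i=\tau^+(v)$ with the tight value at $i=M_v$ to force $|\rho_N^{\tau^+(v)}(v)|\ge|\rho_N^{M_v}(v)|+1$ -- is essentially the same counting consequence of \eqref{ourcond} that drives the paper's proof, invoked once per level of the induction rather than once globally; the final case analysis on $\min\{k,d^-_A(v)\}$ also checks out. What your route buys is modularity and a transparent explanation of why \eqref{ourcond} is the right hypothesis (it is exactly what is preserved when one arborescence is extracted); what it costs is the extra bookkeeping of the lower bound $M_v$ and $k$ passes over the graph instead of one. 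Both proofs share the essential structural idea of processing vertices so that the arcs leaving $v$ in the arborescence(s) are fixed before the arc entering $v$ is chosen.
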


We will partially follow the proof of \cite{kamkaw} but we will point out that Lemmas 3 and 4 in  \cite{kamkaw} are not needed to prove Theorem \ref{theotauarb}. Hence the proof of Theorem \ref{theotauarb} is simpler than that of Theorem \ref{kaka}. The following algorithm is a slightly modified version of the algorithm of Kamiyama and Kawase \cite{kamkaw}. 
Its input is an acyclic  temporal network $N=((V\cup s, A),\tau)$ and $k\in \mathbb{N}$ such that  \eqref{ourcond} is satisfied.
Its output is a packing of $\tau$-respecting $s$-arborescences $T_1,\dots,T_k$ such that each vertex $v$ in $V$ belongs to  $\min\{k,d^-_A(v)\}$ of them.
For every $v\in V,$ let  {\boldmath$I(v)$} be a set of arcs of smallest $\tau$-values entering $v$ of size $\min\{k,d^-_A(v)\}$.
The algorithm will use arcs only in $\bigcup _{v\in V}I(v).$
The algorithm heavily relies on the fact that the network is acyclic. It is well-known that a directed graph $D$ is acyclic if and only if a {\it topological ordering} $v_1,\dots,v_n$ of its vertex set exists, that is if $v_iv_j$ is an arc of $D$ then $i>j.$ 
%Note that the conditions of Theorem \ref{theotauarb} implies that 
Since no arc enters $s$, we may suppose that in a topological ordering $v_n=s.$
\medskip

\noindent {\bf Algorithm} {\sc "Packing Time-Respecting Arborescences"}
\smallskip

Let {\boldmath$v_n$} $=s,\dots,$ {\boldmath$v_1$} be a topological ordering of $V\cup s$.

Let {\boldmath $A_i$} $=\emptyset$ for all $1\le i\le k.$

For $j=1$ to $n-1,$ let

\hskip .7truecm  {\boldmath $I$} $=\{1\le i\le k: \delta _{A_i}(v_j)\neq\emptyset\},$

\hskip .7truecm   {\boldmath $a_i$} be an arc in $\delta _{A_i}(v_j)$ of minimum $\tau$-value for all $i\in I,$

\hskip .7truecm  $\{${\boldmath $\bar a_1$}$,\dots,${\boldmath$\bar a_{|I|}$}$\}$  be an ordering of $\{a_{i}:i\in I\}$ such that $\tau(\bar a_1)\le\dots\le\tau(\bar a_{|I|})$,
%\hskip .7truecm   {\boldmath $a'_1$}$,\dots,${\boldmath$a'_{|I_j^+|}$} be an ordering of $\{a_{i'}:i'\in I^j_+\}$ by $\tau,$

\hskip .7truecm   {\boldmath $\pi$} $:I \rightarrow\{1,\dots, |I|\}$ be the bijection such that $a_i=\bar a_{\pi(i)}$ for all $i\in I,$

\hskip .7truecm  {\boldmath $J$} be a subset of $\{1,\dots, k\}\setminus I$ of size $|I(v_j)|-|I|,$

\hskip .7truecm   {\boldmath $\sigma$} $:J\rightarrow\{1,\dots, |J|\}$ be a bijection,

\hskip .7truecm  $\{${\boldmath$e_1,\dots,e_{|I|},f_1,\dots,f_{|J|}$}$\}$ be an ordering of $I(v_j)$ such that 

\hskip 1.7truecm $\tau(e_1)\le\dots\le\tau(e_{|I|})\le\tau(f_1)\le\dots\le\tau(f_{|J|}),$
%\hskip .7truecm  $\tau(${\boldmath$e_1$}$)\le\dots\le\tau(${\boldmath$e_{|I|}$}$)\le\tau(${\boldmath$f_1$}$)\le\dots\le\tau(${\boldmath$f_{|J|}$}$)$ be an ordering of $I(v_j),$
%\hskip .7truecm  {\boldmath$e_1$}$,\dots,${\boldmath$e_{|I(v_j)|}$} be an ordering of $I(v_j)$ non-decreasing by $\tau,$

\hskip .7truecm   {\boldmath $A_i$} $=A_i\cup e_{\pi(i)}$ for all $i\in I,$ 

\hskip .7truecm   {\boldmath $A_i$} $=A_i\cup f_{\sigma(i)}$  for all $i\in J.$
%\hskip .7truecm   {\boldmath $A_i^j$} $=A_i^{j-1}\cup e_{|I^j_+|+{\sigma_i}}$  for all $i\in I^j_-$, where $i$ is the {\boldmath $\sigma_i$}$^{th}$ element of $I^j_-$  ordered by $<.$

Let  {\boldmath $T_i$} $=(V_i,A_i)$ where {\boldmath $V_i$} is the vertex set of the arc set $A_i$ for all $1\le i\le k.$

Stop.

\begin{theo}\label{justification}
Given an acyclic  temporal network $N=((V\cup s, A),\tau)$ and $k\in \mathbb{N}$ such that  \eqref{ourcond} is satisfied, 
Algorithm {\sc Packing Time-Respecting Arborescences} outputs a packing of $k$ $\tau$-respecting $s$-arborescences such that each vertex $v$ in $V$ belongs to  $\min\{k,d^-_A(v)\}$ of them.
\end{theo}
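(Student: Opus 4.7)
The plan is to analyze the algorithm iteration by iteration and show by induction on $j$ that the execution is well-defined and that each $A_i$ remains a $\tau$-respecting ``partial $s$-arborescence'' in the following sense: $A_i$ is acyclic, every vertex $v \in V_i$ that has already been processed (i.e., $v = v_{j'}$ with $j' \le j$) has exactly one arc of $A_i$ entering it, and along any directed path of $A_i$ the $\tau$-values are non-decreasing. Since $D$ itself is acyclic the acyclicity of $A_i$ is automatic, so the heart of the verification is (a) that $J$ is well-defined, i.e.\ $|I| \le |I(v_j)|$, and (b) that the pairing $\bar a_\ell \leftrightarrow e_\ell$ preserves $\tau$-respecting-ness, which amounts to $\tau(e_\ell) \le \tau(\bar a_\ell)$ for every $\ell \in \{1,\dots,|I|\}$.

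The key step, and the main obstacle, is to deduce (a) and (b) from condition \eqref{ourcond}. Fix $\ell \in \{1,\dots,|I|\}$ and set $t := \tau(\bar a_\ell)$. The arcs $\bar a_1,\dots,\bar a_\ell$ lie in pairwise distinct $A_i$'s, hence are arc-disjoint, and each has $\tau$-value at most $t$; therefore $|\delta_N^{t}(v_j)| \ge \ell$. Since $\ell \le |I| \le k$, condition \eqref{ourcond} applied at vertex $v_j$ and level $t$ yields
\[
|\rho_N^{t}(v_j)| \;\ge\; \min\{k, |\rho_N^{t}(v_j)|\} \;\ge\; \min\{k, |\delta_N^{t}(v_j)|\} \;\ge\; \ell.
\]
Thus $v_j$ has at least $\ell$ entering arcs of $\tau$-value at most $t$, which in particular gives $d^-_A(v_j) \ge \ell$ and hence $|I(v_j)| \ge \min\{k,\ell\} = \ell$. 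Taking $\ell = |I|$ gives (a). For (b), since $I(v_j)$ consists of the $|I(v_j)|$ entering arcs of smallest $\tau$-value and at least $\ell$ entering arcs have $\tau$-value $\le t$, the $\ell$-th smallest element of $I(v_j)$, namely $e_\ell$, satisfies $\tau(e_\ell) \le t = \tau(\bar a_\ell)$.

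It remains to verify the output properties. Each processed vertex $v$ receives exactly one entering arc in each $A_i$ with $i \in I \cup J$ during the iteration $j$ with $v_j = v$, and receives no entering arc during any other iteration (because every arc of $A$ is eligible for addition in at most one iteration, namely the one where its head is processed). Hence $d^-_{A_i}(v) \le 1$ for all $i$ and all $v$. Moreover, whenever an arc is added to $A_i$ its tail is a not-yet-processed vertex, so the invariant is preserved and at the end every non-root vertex of $A_i$ has a directed $A_i$-path back to $s$; combined with acyclicity, each $A_i$ is a $\tau$-respecting $s$-arborescence. The $\tau$-respecting property of the $(s,v)$-paths follows because for any two consecutive arcs $wv, vu$ of $A_i$, when $v$ was processed $vu$ was present in $\delta_{A_i}(v)$ and $wv = e_{\pi(i)}$ was chosen so that $\tau(wv) \le \tau(a_i) \le \tau(vu)$, where $a_i$ was the minimum-$\tau$ out-arc of $v$ in $A_i$. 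Finally, at iteration $j$ exactly $|I|+|J| = |I(v_j)| = \min\{k, d^-_A(v_j)\}$ arborescences receive an entering arc of $v_j$, which gives the claimed coverage.
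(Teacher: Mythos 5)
Your proof is correct and follows essentially the same route as the paper's: the same iteration-by-iteration analysis, with the existence of $J$ and the inequality $\tau(e_{\pi(i)})\le\tau(a_i)\le\tau(a)$ both extracted from condition \eqref{ourcond} by comparing $|\rho_N^{t}(v_j)|$ and $|\delta_N^{t}(v_j)|$ at the threshold $t=\tau(\bar a_\ell)$. The only difference is cosmetic — you derive $\tau(e_\ell)\le\tau(\bar a_\ell)$ directly, where the paper argues by contradiction.
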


\begin{proof}
For all $1\le j\le n-1$, in the $j^{th}$ iteration of the algorithm, by the definition of $I$, \eqref{ourcond} and the definition of $I(v_j)$, we have $|I|\le\min\{k, d^+_A(v_j)\}\le\min\{k, d^-_A(v_j)\}=|I(v_j)|.$ This implies that  $J$ exists.
%For all $1\le j\le n-1$, $I_j$ and $J_j$ denote the sets $I$ and $J$ in the $j^{th}$ iteration of the algorithm. By the definition of $I_j$, \eqref{ourcond} and the definition of $I(v_j)$, we have $|I_j|\le\min\{k, d^+_A(v_j)\}\le\min\{k, d^-_A(v_j)\}=|I(v_j)|.$ This implies that  $J_j$ exists for all $1\le j\le n-1$.
%,  and the existence of $e_{\sigma_j(i)}$ for all $i\in I^j_-$. 
By construction, the digraphs $T_1,\dots, T_k$ are pairwise arc-disjoint and the in-degree of each vertex $v_j\in V_i-s$ is $1$ in $T_i.$ Then, since $N$ is acyclic, $T_i$ is an $s$-arborescence for all $1\le i\le k$. 
Moreover, $|\{1\le i\le k:v_j\in V_i\}|=|I|+|J|=|I(v_j)|=\min\{k, d^-_A(v_j)\}$ for all $1\le j\le n-1.$
To see that $T_i$ is time-respecting for all $1\le i\le k$, let $v_j$ be a vertex in $V_i-s$ and $a\in \delta_{A_i}(v_j)$. Then $e_{\pi(i)}\in \rho_{A_i}(v_j)$. Suppose on the contrary that $\tau(e_{\pi(i)})>\tau(a).$ 
%To see that $T_i$ is time-respecting for all $1\le i\le k$, suppose on the contrary that there exist a vertex $v_j\in V_i$ and $a\in \delta_{A_i}(v_j)$ such that $\tau(e_{\pi(i)})>\tau(a).$ 
Since $\tau(g)\ge \tau(e_{\pi(i)})>\tau(a)$ for all $g\in \rho_A(v_j)\setminus \{e_1,\dots,e_{{\pi(i)}-1}\}$, we have $|\rho_N^{\tau(a)}(v_j)|\le |\{e_1,\dots, e_{{\pi(i)}-1}\}|={\pi(i)}-1.$ Since $\tau(a)\ge \tau(a_i)=\tau(\bar a_{\pi(i)})\ge \tau(\bar a_\ell)$ for all $1\le\ell\le \pi(i)$ and $\pi(i)\le|I|\le k$, we have $\pi(i)=|\{\bar a_1,\dots, \bar a_{\pi(i)}\}|\le\min\{|\delta_N^{\tau(a)}(v_j)|,k\}$. Thus $|\rho_N^{\tau(a)}(v_j)|<\min\{|\delta_N^{\tau(a)}(v_j)|,k\}$ that contradicts \eqref{ourcond}.  This contradiction completes the proof.
\end{proof}
\medskip

Note that Theorem \ref{justification} implies Theorem \ref{theotauarb}. Note also that Theorem \ref{theotauarb} implies Theorem \ref{kaka}. Indeed, if $N$ is pre-flow, then \eqref{ourcond} is satisfied, so, by Theorem \ref{theotauarb}, there exists a packing of $k$ $\tau$-respecting $s$-arborescences  such that each vertex $v$ in $V$ belongs to exactly $\min\{k,d^-_A(v)\}$ of them. This implies that $\min\{k,\lambda_N(s,v)\}=\min\{k,d^-_A(v)\}$ and hence Theorem \ref{kaka} follows.

\section{Packing time-respecting arborescences in non-acyclic pre-flow temporal networks}\label{nonac}

In \cite{kamkaw}, Kamiyama and Kawase  provide an example of 7 vertices and 12 arcs that shows that in Theorem \ref{kaka} one can not delete the condition that $D$ is acyclic. Here we provide a smaller example with 5 vertices and 7 arcs, see the first temporal network in Figure \ref{circuitexists1}. Note that this temporal network contains a directed cycle whose arcs  are not of the same $\tau$-values and hence the temporal network is not consistent.
\medskip

\begin{figure}[h]
	\centerline{\includegraphics[scale=.5]{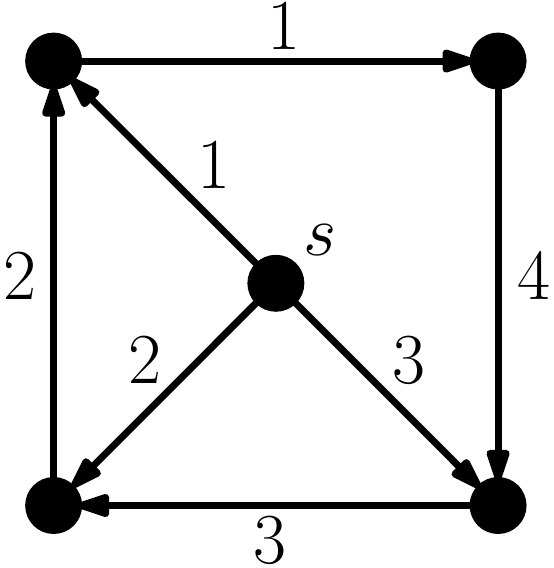}
	\hskip 3truecm
	\includegraphics[scale=.5]{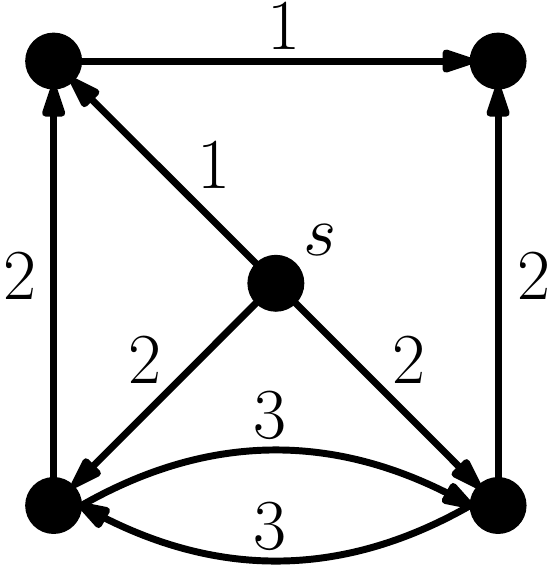}
	\hskip 3truecm
	\includegraphics[scale=.5]{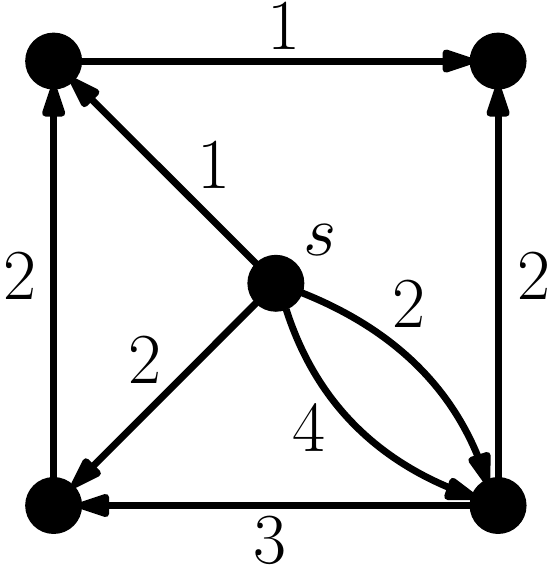}}
	\caption{Three temporal networks $N$ where the $\tau$-value of an arc is presented on the arc. The first two are non-acyclic pre-flow, the second one is consistent. The third one is acyclic but not pre-flow. They contain no 2 arc-disjoint $\tau$-respecting $s$-arborescences  such that each vertex $v$  belongs to  $\min\{2,\lambda_N(s,v)\}$ of them.}\label{circuitexists1}
\end{figure}

The second temporal network in Figure \ref{circuitexists1} is another example that shows that in Theorem \ref{kaka} one can not delete the condition that $D$ is acyclic. Here the temporal network contains one directed cycle $C$ and all the arcs of $C$ are of the same $\tau$-values and hence the temporal network is  consistent.
Note that in this  example  there exists a packing of three  $\tau$-respecting $s$-arborescences  such that each vertex $v$  belongs to exactly $\lambda_N(s,v)$ of them. 
\medskip

Kamiyama and Kawase \cite{kamkaw} also provide an example of 7 vertices and 12 arcs that shows that in Theorem \ref{kaka} one can not delete the condition that $D$ is pre-flow. Here we provide a smaller example with 5 vertices and 8 arcs, see the third temporal network in Figure \ref{circuitexists1}. 
\medskip

We now present the main result of this paper on packing of time-respecting arborescences in consistent pre-flow temporal networks where only the natural upper bound is given on the number of arborescences.  

\begin{theo}\label{cyclic}
Let $N=(D=(V\cup s,A),\tau)$ be a consistent pre-flow temporal network. 
There exists a packing of $d^+_D(s)$ $\tau$-respecting $s$-arborescences, each vertex $v$ in $V$ belonging to  $\lambda_N(s,v)$ of them.
\end{theo}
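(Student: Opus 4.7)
My plan is to reduce to the acyclic pre-flow case already handled by Theorem \ref{theotauarb} via the SCC-condensation, and then to splice in Edmonds-type branchings inside each strongly connected component.

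First, form the condensation $D^*=(V^*\cup s,A^*)$ of $D$: contract each strongly connected component $Q$ of $V$ to a single super-vertex $q_Q$, with $s$ kept apart since no arc enters it. The arcs of $D^*$ are exactly the inter-SCC arcs of $D$ and inherit their labels $\tau^*$ from $\tau$ (well-defined, since no arc of $D^*$ lies inside an SCC). Plainly $D^*$ is acyclic, and summing the pre-flow inequality $|\rho_N^i(v)|\ge|\delta_N^i(v)|$ over $v\in Q$ (the internal arcs of $Q$ contribute equally to both sides and cancel) shows that $(D^*,\tau^*)$ is also pre-flow. Applying Theorem \ref{theotauarb} with $k:=d^+_D(s)=d^+_{D^*}(s)$ therefore yields a packing $T_1^*,\dots,T_k^*$ of $\tau^*$-respecting $s$-arborescences in $D^*$, in which each super-vertex $q_Q$ lies in exactly $\min\{k,d^-_{D^*}(q_Q)\}$ of them.

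Second, each super-vertex $q_Q$ must be re-expanded back into $Q$ in every $T_i^*$ that contains it. Let $a_i=(x_i,u_i)$ be the unique arc entering $q_Q$ in $T_i^*$ and call $a_i$ \emph{good} if $\tau(a_i)\le\tau(Q)$ and \emph{bad} otherwise. Every internal arc of $Q$ carries the common label $\tau(Q)$, so a bad $a_i$ forces the expansion of $T_i^*$ inside $Q$ to consist of the single vertex $u_i$, and moreover every leaving arc of $T_i^*$ out of $q_Q$ must then have tail $u_i$ (otherwise reaching the tail from $u_i$ inside $Q$ would require an internal arc of $Q$ with label $\tau(Q)<\tau(a_i)$, contradicting time-respecting). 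For the good indices I would invoke Edmonds' branching theorem (in Frank's multi-root form) in $Q$ with the list of entry vertices $(u_i)$: the required cut inequality $d_Q^-(X)+|\{i:u_i\in X\}|\ge g$ (where $g$ is the number of good indices) reduces, via the pre-flow inequality summed over $v\in X$ at level $i=\tau(Q)$, to a statement about external arcs entering $X$ that is implied by consistency.

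Third, splicing: the final arborescence $T_i$ is built from $T_i^*$ by inserting, inside each SCC $Q$ it crosses, the single vertex $u_i$ (if $a_i$ is bad) or the corresponding spanning sub-arborescence of $Q$ from Edmonds (if $a_i$ is good). Time-respecting is then checked end-to-end: inter-SCC labels are non-decreasing by the choice of $T_i^*$, internal labels are constant at $\tau(Q)$ by consistency, and the splice inequality $\tau(a_i)\le\tau(Q)\le\tau(b)$ for every exit arc $b$ of $T_i^*$ at $q_Q$ is enforced by the good/bad classification together with time-respecting in $D^*$. For the multiplicities, each $v\in Q$ is in every $T_i$ with a good entering arc to $Q$ plus every $T_i$ with a bad entering arc whose head is $v$; a Menger-style cut argument in $N$ then identifies this count with $\lambda_N(s,v)$. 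The main obstacle is the second step: Theorem \ref{theotauarb} does not by itself guarantee the combinatorial compatibility needed for bad entering arcs (namely, that they carry no exit demand away from their head vertex), so one must either choose the $T_i^*$'s more surgically within that theorem, or work in an enriched condensation where each SCC $Q$ is represented by a split pair $(q_Q^-,q_Q^+)$ joined by an arc labelled $\tau(Q)$, so that the splice condition is built into the acyclic reduction from the start; this complication is presumably why the authors warn that ``the exact process used in the proof is a bit more complex.''
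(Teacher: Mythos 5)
Your high-level plan (condense the strongly connected components, solve an acyclic problem on the quotient, splice arborescences back into each component) is the same skeleton as the paper's proof, but two of your key steps do not survive scrutiny. The most serious one is the claim that each ``good'' index can be assigned a \emph{spanning} arborescence of $Q$ via Frank's multi-root form of Edmonds' theorem, with the cut condition $d^-_Q(X)+|\{i:u_i\in X\}|\ge g$ ``implied by'' pre-flow and consistency. It is not. Take $Q$ to be a single directed cycle $u_1\to v\to u_2\to\cdots\to u_k\to u_1$ with all arcs labelled $2$, and add $su_1$ and $su_2$ labelled $1$: the network is consistent and pre-flow, both entering arcs are good ($g=2$), but for $X=\{v\}$ we have $d^-_Q(X)=1<2=g-|\{i:u_i\in X\}|$, and indeed $\lambda_N(s,v)=1$, so $v$ must lie in only one of the two arborescences. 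This also kills your multiplicity formula ``$g$ plus the number of bad entering arcs with head $v$'': it would give $2\ne 1=\lambda_N(s,v)$ here. The pieces of the arborescences inside $Q$ cannot all span $Q$; what is needed is a \emph{local-connectivity} packing inside an auxiliary graph built around $Q$, which is exactly why the paper invokes the Bang-Jensen--Frank--Jackson theorem (Theorem \ref{bjfj}) on a pre-flow digraph $H_j$ obtained by contracting everything above $Q$ to a source $s_j$ and everything below to a sink $t_j$, rather than Edmonds' theorem on $Q$ alone.

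The second gap is the one you flag yourself at the end, and it is not a removable technicality: Theorem \ref{theotauarb} applied to the condensation gives you no control over which exit arcs of a contracted vertex end up in the same arborescence as which entry arc, so bad entry arcs (and, more generally, exit arcs whose label is smaller than $\tau(Q)$) cannot be spliced. The paper's resolution is structurally different from ``choosing the $T_i^*$'s more surgically'': it does not pack arborescences on the condensation at all. Instead it first makes the network almost Eulerian by adding a sink $t$, fixes a $\tau'$-respecting bijection $\mu'_v$ between out-arcs and in-arcs at every original vertex, assembles from these a $\tau''$-respecting bijection $\mu''_j$ at every contracted vertex (partitioning its out-arcs into three classes according to how the labels of their $\mu'$-partners compare with $\tau(Q'_j)$, the third class being matched via the $(s_j,t_j)$-paths of the Bang-Jensen--Frank--Jackson packing, chosen to minimize a defect set $F_{{\cal B}_j}$), and only then decomposes the acyclic quotient into $\tau''$-respecting $(s,t)$-paths by Proposition \ref{almeul} and extends those paths by the arborescences inside each component. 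Your split-pair idea $(q_Q^-,q_Q^+)$ points in a reasonable direction, but as written the proposal neither establishes the packing inside $Q$ nor the entry/exit compatibility, so both central steps remain unproved.
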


To prove Theorem \ref{cyclic}, we need an easy observation on almost Eulerian acyclic pre-flow temporal networks.
A similar result has already been presented in \cite{kamkaw}. 

\begin{prop}\label{almeul}
If $N=(D=(V\cup\{s,t\},A),\tau)$ is an almost Eulerian acyclic  temporal network and $\mu_v$ is a $\tau$-respecting bijection from $\delta_D(v)$ to $\rho_D(v)$ for all $v\in V$, then $D$ decomposes into $d^+_D(s)$ $\tau$-respecting $(s,t)$-paths such that each vertex $v\in V$ belongs to $d^-_D(v)$ of them.
\end{prop}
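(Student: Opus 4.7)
The plan is to use the bijections $\mu_v$ to chain arcs into paths. The key observation is that if we arrive at an internal vertex $v\in V$ along an incoming arc $e\in\rho_D(v)$ and continue along the unique outgoing arc $f=\mu_v^{-1}(e)\in\delta_D(v)$, then $e=\mu_v(f)$ and the $\tau$-respecting property of $\mu_v$ gives $\tau(e)=\tau(\mu_v(f))\le\tau(f)$. Hence chains built by repeatedly applying $\mu_v^{-1}$ at internal vertices are automatically $\tau$-respecting.

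Concretely, I would define a decomposition as follows: initialize one path for each arc $e_0\in\delta_D(s)$, and iteratively extend each path from its current endpoint $v$ via the arc $\mu_v^{-1}(e)$, where $e$ is the last arc added. Because each $\mu_v$ is a bijection, every incoming arc at $v$ is matched to exactly one outgoing arc, so no arc is used by two different paths and no arc is skipped; in other words, the family of maximal chains produced this way partitions $A$. Since $D$ is acyclic, a chain cannot revisit a vertex, so every chain is a simple directed path. A chain can terminate at a vertex $w$ only if $w$ has no outgoing arc left to be matched with its last incoming arc, and since the bijection $\mu_w$ exists for every $w\in V$, termination can occur only at $w=t$ (where $d^+_D(t)=0$). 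Symmetrically, a chain can begin only at $s$ because $d^-_D(s)=0$ and every vertex of $V$ has a bijection. Therefore the decomposition consists of $(s,t)$-paths.

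Counting is then straightforward: the number of paths equals the number of starting arcs, which is $d^+_D(s)$; and for each $v\in V$, the number of paths passing through $v$ equals the number of incoming arcs at $v$ (each incoming arc lies on exactly one path), that is $d^-_D(v)$. Combined with the $\tau$-respecting property verified above, this yields the claimed decomposition.

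The only subtlety worth checking carefully is the termination argument: one must rule out a chain becoming an infinite sequence or closing into a cycle, but both are excluded by acyclicity of $D$ together with finiteness of $A$. Everything else is bookkeeping.
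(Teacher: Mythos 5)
Your proposal is correct and uses the same key mechanism as the paper: chaining arcs via the successor map $e\mapsto\mu_v^{-1}(e)$, with acyclicity ruling out cycles and forcing every maximal chain to run from $s$ to $t$. The paper merely organizes this as an induction on $d^+_D(s)$, peeling off one such chain at a time, whereas you extract the whole decomposition at once; the resulting partition of $A$ is the same.
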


\begin{proof}
We prove the claim by induction on $d^+_D(s)$. If $d^+_D(s)=0$, then, since $D$ is almost Eulerian and acyclic, we have $d^-_D(v)=0$ for all $v\in V$ and we are done. Otherwise, there exists an arc leaving $s.$ Then, using the bijections $\mu^{-1}_v$ and the facts  that $D$ is acyclic and $\mu_v$ is a $\tau$-respecting, we find a $\tau$-respecting directed $(s,t)$-path $P.$ By deleting the arcs of $P$ and applying the induction, the claim follows.
\end{proof}
\medskip

We also need the following result of Bang-Jensen, Frank, Jackson \cite{bfj}.  

\begin{theo}[\cite{bfj}]\label{bjfj}
Let $D=(V\cup s,A)$ be a pre-flow directed graph.   
There exists a packing of  $s$-arborescences, each vertex $v\in V$ belonging to  $\lambda_D(s,v)$ of them.
\end{theo}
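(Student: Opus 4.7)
The plan is to reduce Theorem \ref{bjfj} to Edmonds' classical theorem on packing spanning $s$-arborescences by means of an augmentation, and then to handle the counting by an exchange argument.

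Set $k := \max_{v \in V} \lambda_D(s, v)$ and $k_v := \lambda_D(s, v)$ for each $v \in V$. Construct $\hat D$ from $D$ by adjoining a new super-root $r$, $k$ parallel arcs $rs$, and, for each $v \in V$, $k - k_v$ parallel arcs $rv$. The first step is to verify that $\lambda_{\hat D}(r, v) \ge k$ for every $v \in V \cup \{s\}$. For $v = s$ this is immediate from the $k$ parallel arcs $rs$; for $v \in V$, Menger's theorem reduces the task to checking $d^-_{\hat D}(X) \ge k$ for every $X$ with $v \in X$ and $r \notin X$. If $s \in X$, the $k$ arcs $rs$ enter $X$ and we are done; otherwise, the inequality $d^-_D(X) \ge \max_{v' \in X} k_{v'}$ (which follows directly from Menger's theorem applied in $D$) together with the $\sum_{v' \in X}(k - k_{v'})$ new arcs entering $X$ from $r$ yields $d^-_{\hat D}(X) \ge k$.

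By Edmonds' theorem, $\hat D$ then admits a packing $\hat T_1, \ldots, \hat T_k$ of $k$ arc-disjoint spanning $r$-arborescences. Because $D$ has no arc entering $s$, each $\hat T_i$ must enter $s$ via one of the $rs$-arcs, so all $k$ of them are used exactly once. Deleting $r$ from $\hat T_i$ leaves a forest whose component containing $s$ is an $s$-arborescence $T_i$ of $D$; these $T_1, \ldots, T_k$ are the candidates for the desired packing. For a given $v \in V$, $v$ belongs to $T_i$ iff the unique $(r,v)$-path in $\hat T_i$ starts with an $rs$-arc and reaches $v$ using only arcs of $D$. In that case the $(s,v)$-subpaths across the indices $i$ with $v \in T_i$ form arc-disjoint $(s,v)$-paths of $D$, so $|\{i : v \in T_i\}| \le k_v$.

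The main obstacle is the reverse inequality $|\{i : v \in T_i\}| \ge k_v$. I would handle this by an exchange argument: among all Edmonds packings in $\hat D$, take one minimising $\Phi := \sum_{v \in V} \bigl(k_v - |\{i : v \in T_i\}|\bigr)^+$. Supposing $\Phi > 0$ witnessed by some $v$, fewer than $k_v$ arborescences reach $v$ via $s$ using only $D$-arcs, while strictly more than $k - k_v$ arborescences reach $v$ using some $rv'$-arc followed by a $(v',v)$-path of $D$. Here the pre-flow hypothesis is essential: it sharpens the cut inequality to $d^-_D(X) \ge \max_{v' \in X} k_{v'}$ with controlled slack given by the in--out degree gap, which supplies the alternating structure needed to reroute one of the offending arborescences through a hitherto unused $(s,v)$-path of $D$, strictly decreasing $\Phi$ and contradicting minimality. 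Hence $\Phi = 0$, and $|\{i : v \in T_i\}| = k_v$ for every $v$. Verifying carefully that the rerouting step preserves arc-disjointness and the arborescence property of every $\hat T_j$ in the packing is the most delicate part of the argument; this is where I expect the main technical work to lie.
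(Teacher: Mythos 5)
The paper does not prove this statement at all: it is quoted verbatim from Bang-Jensen, Frank and Jackson \cite{bfj} and used as a black box, so there is no internal proof to compare yours against. Judged on its own, your proposal has a genuine gap. The reduction to Edmonds' theorem is fine as far as it goes: the cut computation $d^-_{\hat D}(X)\ge \max_{v'\in X}k_{v'}+\sum_{v'\in X}(k-k_{v'})\ge k$ is correct, each $\hat T_i$ uses exactly one $rs$-arc, and the upper bound $|\{i:v\in T_i\}|\le k_v$ follows from arc-disjointness and Menger. But none of this uses the pre-flow hypothesis, and the whole content of the theorem is the lower bound, which you defer to an ``exchange argument'' that is only gestured at. Rerouting a single $\hat T_j$ so that its $(r,v)$-path passes through $s$ means changing the arc entering $v$ (or some ancestor of $v$) in $\hat T_j$; the replacement arcs must avoid all other $\hat T_i$, the subtree hanging below $v$ must be reattached, and the counts $|\{i:u\in T_i\}|$ for every other vertex $u$ must not drop. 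You acknowledge this is ``the most delicate part,'' but that part \emph{is} the theorem --- without it the argument proves nothing beyond what Edmonds already gives.

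Moreover, the role you assign to the pre-flow condition is misattributed. The inequality $d^-_D(X)\ge\max_{v'\in X}k_{v'}$ is pure Menger and holds in any digraph; what pre-flow actually provides is $d^-_D(X)\ge d^+_D(X)$ for every $X\subseteq V$ (sum the vertex inequalities over $X$), and your sketch never says where that enters. Since the statement is false without the pre-flow hypothesis, any correct proof must invoke it at a specific, identifiable point; a proof sketch that cannot locate that point cannot be considered essentially complete. The actual proof in \cite{bfj} proceeds by induction combined with arc splitting-off, not by a potential-function perturbation of an Edmonds packing, and I see no indication that your $\Phi$-minimisation can be pushed through.
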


We are  ready to prove our main result.
\medskip

\begin{proof} {\bf (of Theorem \ref{cyclic})}
First we transform the instance into another one {\boldmath $N'$} $=(D',\tau')$ as follows. The directed graph {\boldmath $D'$} $=(V\cup \{s,t\}, A\cup A')$ is obtained from $D$ by adding a new vertex {\boldmath $t$} and $d^-_D(v)-d^+_D(v)$ parallel arcs from $v$ to $t$ for all $v\in V$ and we define {\boldmath $\tau'$}$(a)$ to be equal to $\tau(a)$ if $a\in A$ and to $M$ if $a\in A'$, where {\boldmath $M$} $=\max\{\tau(a): a\in A\}.$ Since $N$ is pre-flow, so is $D$, that is $d^-_D(v)-d^+_D(v)\ge 0$ for all $v\in V$ and hence the construction is correct.
This way we get an instance which remains consistent ($\{t\}$ is a new strongly connected component) and pre-flow (by the definition of $M$) and $D'$ is  almost Eulerian.
\medskip

For each vertex $v\in V$, let us fix  orderings of $\rho_{D'}(v)$ and  $\delta_{D'}(v)$ such that $\tau'(e_1)\le\dots\le\tau'(e_{d^-_{D'}(v)})$ and $\tau'(f_1)\le\dots\le\tau'(f_{d^+_{D'}(v)})$, respectively. Then $\mu'_v(f_j)=e_j$ for all $1\le j\le d^+_{D'}(v)$ is a $\tau'$-respecting bijection for all $v\in V.$
Indeed, if there exists $j$ such that $\tau'(e_j)=\tau'(\mu'_v(f_j))>\tau'(f_j)=:i,$ then $|\rho_{N'}^i(v)|\le j-1<j\le|\delta_{N'}^i(v)|$ that contradicts the fact that $N'$ is pre-flow.
\medskip

To reduce the problem to an easy acyclic problem that can be treated by Proposition \ref{almeul} and some problems that can be treated by Theorem \ref{bjfj}, let us denote the strongly connected components of $D'$  by {\boldmath $Q'_1$}$,\dots,$ {\boldmath $Q'_{\ell}$}. 
Let {\boldmath $U_j$} denote the vertex set of $Q_j'$ for all $1\le j\le \ell.$
Then the directed graph {\boldmath $D''$} obtained from $D'$ by contracting each $Q'_j$ into a vertex {\boldmath $q''_j$}   is acyclic. By changing the indices if it is necessary, we may suppose that $q''_{\ell}=s, \dots, q''_1=t$ is a topological ordering of the vertices of $D''.$ 
Let   {\boldmath $N''$} $=(D'',\tau'')$ be the temporal network where {\boldmath $\tau''$}$(a)=\tau'(a)$ for all $a\in A(D'').$   Note  that since $D'$ is almost Eulerian, so is $D''.$ Indeed, we have $d^-_{D''}(q''_j)-d^+_{D''}(q''_j)=d^-_{D'}(U_j)-d^+_{D'}(U_j)=\sum_{v\in U_j}(d^-_{D'}(v)-d^+_{D'}(v))=0$ for all $2\le j\le \ell-1.$ Note also that $d^+_D(s)=d^+_{D'}(s)=d^+_{D''}(s).$
\medskip

To define  a convenient $\tau''$-respecting bijection $\mu_j''$ from $\delta_{D''}(q''_j)=\delta_{D'}(U_j)$ to $\rho_{D''}(q''_j)=\rho_{D'}(U_j)$ for all $2\le j\le \ell-1$,  let us fix such a {\boldmath $j$} and let us define the following sets:
\medskip

 {\boldmath $R _j^1$} $=\{vw\in \delta_{D'}(U_j): \tau'(\mu'_v(vw))>\tau'(Q'_j)\}$, 
 
 {\boldmath $R _j^2$} $=\{vw\in \delta_{D'}(U_j): \tau'(vw)<\tau'(Q'_j)\}$, 
 
 {\boldmath $R _j^3$} $=\delta_{D'}(U_j)\setminus(R _j^1 \cup R _j^2)$, 
 
 {\boldmath $S _j^1 $} $=\{\mu'_v(vw): vw\in R _j^1 \}$, 
 
 {\boldmath $S _j^2$} $=\{\mu'_v(vw): vw\in R _j^2\}$ and 
 
 {\boldmath $S _j^3$} $=\rho_{D'}(U_j)\setminus(S _j^1 \cup S _j^2)$. 

\begin{claim}\label{partitions}
$\{R _j^1 ,R _j^2,R _j^3\}$ is a partition of $\delta_{D'}(U_j)$ and $\{S _j^1 ,S _j^2,S _j^3\}$ is a partition of $\rho_{D'}(U_j)$.
\end{claim}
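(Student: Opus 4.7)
The plan is to handle the two partitions separately, noting that in each case the third set is by definition the complement of the first two. So the nontrivial content reduces to three points: (a) $R_j^1 \cap R_j^2 = \emptyset$, (b) $S_j^1 \cup S_j^2 \subseteq \rho_{D'}(U_j)$, and (c) $S_j^1 \cap S_j^2 = \emptyset$.

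For (a), I would use the $\tau'$-respecting property of $\mu'_v$ directly: if $vw \in R_j^2$, then $\tau'(\mu'_v(vw)) \le \tau'(vw) < \tau'(Q'_j)$, so $vw \notin R_j^1$. This immediately gives the first partition, and also yields as a byproduct that for every $vw \in R_j^2$ the image $\mu'_v(vw)$ has $\tau'$-value strictly less than $\tau'(Q'_j)$, which I will reuse below.

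For (b), the key observation is that consistency of $N'$ forces every arc with both endpoints in $U_j$ to carry the $\tau'$-value $\tau'(Q'_j)$. Given $vw \in R_j^1 \cup R_j^2$, the arc $\mu'_v(vw) \in \rho_{D'}(v)$ enters a vertex of $U_j$, hence lies in $\rho_{D'}(U_j)$ precisely when its tail is outside $U_j$. For $vw \in R_j^1$ the image has $\tau'$-value strictly greater than $\tau'(Q'_j)$ by definition of $R_j^1$; for $vw \in R_j^2$ the chain from (a) shows it has $\tau'$-value strictly less than $\tau'(Q'_j)$. In either case $\mu'_v(vw)$ cannot be an internal arc of $Q'_j$, so it indeed enters $U_j$ from outside.

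For (c), I would argue that the assignment $vw \mapsto \mu'_v(vw)$ is injective on $\delta_{D'}(U_j)$: two arcs with distinct tails $v \ne v'$ map to arcs with distinct heads $v, v'$, while two arcs sharing a tail $v$ are separated by the bijectivity of $\mu'_v$ on $\delta_{D'}(v)$. Since $R_j^1 \cap R_j^2 = \emptyset$ by (a), this injectivity yields $S_j^1 \cap S_j^2 = \emptyset$. The only step that is more than bookkeeping is the appeal to consistency in (b); even there, the argument collapses to the single remark that internal arcs of $Q'_j$ all share the $\tau'$-value $\tau'(Q'_j)$, and the two conditions defining $R_j^1$ and $R_j^2$ are designed precisely to exclude that value.
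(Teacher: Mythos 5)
Your proof is correct and follows essentially the same route as the paper's: the same chain of inequalities $\tau'(\mu'_v(vw))\le\tau'(vw)<\tau'(Q'_j)$ for $R_j^2$ versus $\tau'(\mu'_v(vw))>\tau'(Q'_j)$ for $R_j^1$ separates the two sets and, combined with consistency (every arc of $Q'_j$ has $\tau'$-value $\tau'(Q'_j)$), places $S_j^1\cup S_j^2$ inside $\rho_{D'}(U_j)$. The only local difference is that the paper obtains $S_j^1\cap S_j^2=\emptyset$ directly from this value separation (arcs of $S_j^1$ have value strictly above $\tau'(Q'_j)$, arcs of $S_j^2$ strictly below), whereas you deduce it from the injectivity of $vw\mapsto\mu'_v(vw)$ together with $R_j^1\cap R_j^2=\emptyset$; both arguments are valid.
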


\begin{proof}
If $vw\in R _j^1 ,$ $v'w'\in R _j^2,$ $uv=\mu'_v(vw)\in S _j^1 $ and $u'v'=\mu'_{v'}(v'w')\in S _j^2$, then, since  $\mu'_v$ and $\mu'_{v'}$ are  $\tau'$-respecting bijections, we have $\tau'(vw)\ge\tau'(\mu'_v(vw))=\tau'(uv)>\tau'(Q'_j)>\tau'(v'w')$ $\ge \tau'(\mu'_v(v'w'))=\tau'(u'v')$. Thus $vw\ne v'w'$ and $uv\ne u'v'$, so $R _j^1 \cap R _j^2=\emptyset$ and $S _j^1 \cap S _j^2=\emptyset.$ By the definition of $R _j^1$ and $R _j^2$, we have $R _j^1 \cup R _j^2\subseteq \delta_{D'}(U_j)$. If $vw\in R _j^1$, then $\tau'(\mu'_v(vw))>\tau'(Q'_j).$ If $vw\in R _j^2$, then, since $\mu'_v$ is a $\tau'$-respecting bijection, we get $\tau'(\mu'_v(vw))\le \tau'(vw)<\tau'(Q'_j).$ Then, using that each arc in $Q'_j$ has $\tau'$-value  $\tau'(Q'_j),$ we have $S _j^1 \cup S _j^2\subseteq \rho_{D'}(U_j)$. By the definition of $R _j^3$ and $S _j^3$, Claim \ref{partitions} follows.
\end{proof}
\medskip

We now start to define $\mu''_j$. For $vw\in R _j^1 \cup R _j^2$, let {\boldmath $\mu''_j$}$(vw)=\mu'_v(vw)$. 
Since each $\mu'_v$ is $\tau'$-respecting, we have $\tau''(vw)=\tau'(vw)\ge\tau'(\mu'_v(vw))=\tau''(\mu''_v(vw)).$
Note that for all $xy\in R _j^3$ and  for all $uv\in S _j^3$, $\tau'(xy)\ge \tau'(Q'_j)\ge\tau'(uv).$
However, we cannot take an arbitrary bijection from $R _j^3$ to $S _j^3$ because  we  have to guarantee that the vertices in $Q'_j$ also belong to the required number of arborescences. In order to do this, let us define the temporal network {\boldmath $N'_j$} $=(D'_j,\tau'_j)$ where    the directed graph {\boldmath $D'_j$} is obtained from $D'$ by contracting $\bigcup_{i>j} U_i$ into a vertex {\boldmath $s_j$}, contracting $\bigcup_{i<j} U_i$ into a vertex {\boldmath $t_j$} and deleting the arcs from $s_j$ to $t_j$ and {\boldmath $\tau'_j$}$(a)=\tau'(a)$ for all $a\in A(D'_j).$ 

\begin{claim}\label{lg} $N'_j$ satisfies the following.
\begin{itemize}
	\item [(a)] $D'_j$ is  almost Eulerian,
	\item [(b)] $\lambda_{D'_j}(s_j,t_j)=d^-_{D'_j}(t_j)$, 
	\item [(c)] $\lambda_{N'_j}(s_j,v)\ge \lambda_{N'}(s,v)$ for all $v\in U_j.$
\end{itemize}
\end{claim}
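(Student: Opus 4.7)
The plan is to exploit the topological ordering $q''_\ell=s,\dots,q''_1=t$ of $D''$ to control where arcs of $D'$ sit relative to the three parts $\bigcup_{i>j} U_i$, $U_j$, $\bigcup_{i<j} U_i$. By the topological ordering, every arc of $D'$ between two distinct parts goes from a higher-index part to a lower-index one. In particular, no arc of $D'$ enters $\bigcup_{i>j} U_i$ from outside and no arc of $D'$ leaves $\bigcup_{i<j} U_i$ to outside, so after the two contractions and the deletion of the $(s_j,t_j)$-arcs, one obtains $d^-_{D'_j}(s_j)=0$ and $d^+_{D'_j}(t_j)=0$. Likewise, every arc of $D'$ with exactly one endpoint in $U_j$ survives in $D'_j$ (with its outside endpoint relabelled $s_j$ or $t_j$), so $d^-_{D'_j}(v)=d^-_{D'}(v)$ and $d^+_{D'_j}(v)=d^+_{D'}(v)$ for every $v\in U_j$. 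Combined with $D'$ being almost Eulerian, this gives (a).

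For (b), summing the almost Eulerian identities over all of $V(D'_j)$, together with $d^-_{D'_j}(s_j)=0=d^+_{D'_j}(t_j)$, forces $d^+_{D'_j}(s_j)=d^-_{D'_j}(t_j)$. I will prove the nontrivial inequality by Menger's theorem: for any cut $X$ with $s_j\in X$ and $t_j\notin X$, summing $d^+-d^-$ over the vertices of $X$ gives $d^+_{D'_j}(X)-d^-_{D'_j}(X)=d^+_{D'_j}(s_j)$, and hence $d^+_{D'_j}(X)\ge d^+_{D'_j}(s_j)=d^-_{D'_j}(t_j)$. The matching upper bound is achieved by the cut $X=V(D'_j)\setminus\{t_j\}$.

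The main obstacle is (c), where one must lift arc-disjoint $\tau'$-respecting $(s,v)$-paths in $D'$ to $\tau'_j$-respecting $(s_j,v)$-paths in $D'_j$ without losing any. Let $P_1,\dots,P_k$ be such paths with $k=\lambda_{N'}(s,v)$ and $v\in U_j$. Since $D''$ is acyclic, each $P_r$ visits the strongly connected components of $D'$ in an order that, once it leaves a component, never returns to it. Because $P_r$ ends at $v\in U_j$, it enters $U_j$ exactly once and stays inside $U_j$ from that point until $v$; in particular it cannot use any of the deleted $(s_j,t_j)$-arcs, since those land in $\bigcup_{i<j} U_i$ and would make reaching $v$ impossible. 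The portion of $P_r$ from its entry arc into $U_j$ onwards therefore becomes, after contraction, a path $P'_r$ in $D'_j$ starting at $s_j$ and ending at $v$. The arcs of the $P'_r$ form a subfamily of those of the $P_r$, so the $P'_r$ are arc-disjoint; and since $\tau'_j=\tau'$ on these surviving arcs, each $P'_r$ inherits the $\tau'$-respecting property. This yields $k$ arc-disjoint $\tau'_j$-respecting $(s_j,v)$-paths in $D'_j$, proving (c).
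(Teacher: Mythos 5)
Your proof is correct and follows essentially the same route as the paper's: (a) via preservation of the in- and out-degrees on $U_j$ together with the topological ordering of $D''$, and (c) by restricting the given arc-disjoint $\tau'$-respecting $(s,v)$-paths to their final segments inside $U_j$, which contract to $(s_j,v)$-paths. The only (cosmetic) difference is in (b), where you use Menger's theorem with a cut-counting argument, whereas the paper adds $d^-_{D'_j}(t_j)$ parallel arcs from $t_j$ to $s_j$ to make the digraph Eulerian and extracts the paths from a decomposition into directed cycles; both arguments are standard and equally valid.
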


\begin{proof}
(a) Since $D'$ is  almost Eulerian, so is ${D'_j}$. Indeed, we have $d^-_{D'_j}(v)=d^-_{D'}(v)=d^+_{D'}(v)=d^+_{D'_j}(v)$ for all $v\in U_j.$ 

(b) By (a)  and $d^-_{D'_j}(s_j)=0=d^+_{D'_j}(t_j)$, (b) easily follows. Indeed, let {\boldmath$r_j$} $=d^-_{D'_j}(t_j)$ and let us define {\boldmath$D^*_j$} by adding $r_j$  arcs $\{${\boldmath$h_1$}$,\dots, ${\boldmath$h_{r_j}$}$\}$ from $t_j$ to $s_j$ in $D'_j$. Then, by (a), $D^*_j$  is Eulerian. Thus  it decomposes into directed cycles. Let {\boldmath$C_1, \dots, C_{r_j}$} be the arc-disjoint directed cycles that contain the arcs $h_1,\dots, h_{r_j}$. Then {\boldmath$P_1$} $=C_1-h_1, \dots,$ {\boldmath$P_{r_j}$} $=C_{r_j}-h_{r_j}$ are arc-disjoint directed $(s_j,t_j)$-paths. Hence ${r_j}\le\lambda_{D'_j}(s_j,t_j)\le{r_j}$, and we have (b).

(c) For all $v\in U_j,$ any $\tau'$-respecting $(s,v)$-path in $N'$ provides a $\tau'_j$-respecting $(s_j,v)$-path in $N'_j,$ and (c) follows.
\end{proof}
\medskip

To be able to use normal arborescences (not time-respecting ones), we have to modify $D'_j.$ No $\tau$-respecting directed path in $D$ may contain an arc in $S _j^1 $ and an arc in $Q'_j$, hence the corresponding arcs  in $R_j^1$ and $S_j^1$ will be deleted from  $D'_j$. A $\tau$-respecting $s$-arborescence in $D$ may contain an arc $\mu'_v(vw)$ in $S _j^2$ (where $vw\in R _j^2$) and an arc in $Q'_j$, but this arborescence must contain $vw$. To guarantee this property we use a trick: we replace the corresponding two arcs in $R_j^2$ and $S_j^2$ in $D'_j$ by two convenient arcs. More precisely, let {\boldmath $H_j$} be obtained from $D'_j$ by deleting $s_jv$ and $vt_j$ that correspond to $\mu'_v(vw)$ and $vw$ for all  $vw\in R _j^1 $ and replacing $s_jv$ and $vt_j$ that correspond to $\mu'_v(vw)$ and $vw$ for all  $vw\in R _j^2$ by {\boldmath$e_{vw}$} $=s_jt_j$ and {\boldmath$f_{vw}$} $=t_jv.$ Let {\boldmath$E_j$} $=\{e_{vw}:vw\in R _j^2\}$ 
%denote the set of new arcs from $s_j$ to $t_j$ 
and {\boldmath$F_j$} $=\{f_{vw}:vw\in R _j^2\}$.
% the set of new arcs leaving $t_j$. 
%Then we have a natural bijection $\varphi$ from $L_j$ to $K_j,$ $\varphi(a'_{vw})=a_{vw}.$

\begin{claim}\label{lambda} 
$H_j$ satisfies the following.
\begin{itemize}
	\item [(a)]  $H_j$ is pre-flow,
	\item [(b)] $\lambda_{H_j}(s_j,t_j)=d^-_{H_j}(t_j)$, 
	\item [(c)]  $\lambda_{H_j}(s_j,v)\ge\lambda_{N'_j}(s_j,v)-d^-_{S_j^1}(v)$ for all $v\in U_j$.
\end{itemize}
\end{claim}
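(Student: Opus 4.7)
My plan is to verify (a), (b), and (c) separately by carefully tracking how arc-incidence at each vertex changes when passing from $D'_j$ to $H_j$. For $v \in U_j$ and $k \in \{1,2\}$, I would set $r_v^k = |R_j^k \cap \delta_{D'}(v)|$; since $\mu'_v$ is a bijection, the arcs of $S_j^k$ entering $v$ are exactly $\{\mu'_v(vw): vw \in R_j^k \cap \delta_{D'}(v)\}$, so there are $r_v^k$ of them. Passing to $H_j$ therefore removes $r_v^1 + r_v^2$ arcs at both $\delta(v)$ and $\rho(v)$ while adding the $r_v^2$ arcs $f_{vw}$ entering $v$.

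For (a), using that $D'_j$ is almost Eulerian (Claim \ref{lg}(a)), this bookkeeping would yield $d^-_{H_j}(v) - d^+_{H_j}(v) = r_v^2 \geq 0$ for $v \in U_j$. At $t_j$, an analogous count produces $d^-_{H_j}(t_j) = d^-_{D'_j}(t_j) - |R_j^1|$ and $d^+_{H_j}(t_j) = |R_j^2|$, and the pre-flow inequality follows from $|R_j^1| + |R_j^2| \leq |\delta_{D'}(U_j)| = d^-_{D'_j}(t_j)$.

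For (b), since $\lambda_{H_j}(s_j, t_j) \leq d^-_{H_j}(t_j)$ is trivial, I would apply Menger's theorem and bound every $(s_j, t_j)$-cut $X$ from below by $d^-_{H_j}(t_j)$. The identity $|\delta_{H_j}(X)| - |\rho_{H_j}(X)| = \sum_{v \in X}(d^+_{H_j}(v) - d^-_{H_j}(v))$ reduces by (a) to $d^+_{H_j}(s_j) - \sum_{v \in X \cap U_j} r_v^2$. The key observation is that each arc $f_{vw} = t_j v$ with $v \in X \cap U_j$ enters $X$ (since $t_j \notin X$), so $|\rho_{H_j}(X)| \geq \sum_{v \in X \cap U_j} r_v^2$. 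Combining, $|\delta_{H_j}(X)| \geq d^+_{H_j}(s_j)$, and a short count gives $d^+_{H_j}(s_j) = d^-_{D'_j}(t_j) - |R_j^1| = d^-_{H_j}(t_j)$.

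The main obstacle is (c). I plan to transfer a family of $\lambda_{N'_j}(s_j, v)$ arc-disjoint $\tau'_j$-respecting $(s_j, v)$-paths from $N'_j$ into arc-disjoint (not necessarily $\tau$-respecting) $(s_j, v)$-paths in $H_j$. The crucial structural observation is that any $\tau'_j$-respecting $(s_j, v')$-path in $N'_j$ whose first arc lies in $S_j^1$ must reduce to the single arc $s_j v'$: any continuation along an arc of $Q'_j$ would violate $\tau'$-monotonicity (such arcs have $\tau'$-value $\tau'(Q'_j)$, strictly smaller than that of an $S_j^1$ arc), and any continuation to $t_j$ terminates the path there because $d^+_{D'_j}(t_j) = 0$. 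Hence at most $r_v^1 = d^-_{S_j^1}(v)$ of our paths begin with an $S_j^1$ arc, and I would discard them. For each surviving path whose first arc is an $S_j^2$ arc $s_j v^* = \mu'_{v^*}(v^* w^*)$, I would substitute the detour $e_{v^* w^*} \cdot f_{v^* w^*} = s_j t_j \cdot t_j v^*$, which lies in $H_j$. The remaining arcs of the paths are untouched by the construction of $H_j$: arcs of $R_j^1 \cup R_j^2$ end at $t_j$ and so cannot lie on an $(s_j, v)$-path, while the only other modified arcs ($S_j^1 \cup S_j^2$) leave $s_j$ and could only appear as first arcs. Since distinct $S_j^2$ arcs pair with distinct $(e_{vw}, f_{vw})$ couples, arc-disjointness is preserved, yielding $\lambda_{N'_j}(s_j, v) - d^-_{S_j^1}(v)$ arc-disjoint $(s_j, v)$-paths in $H_j$.
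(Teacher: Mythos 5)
Your proof is correct and follows essentially the same route as the paper's: degree bookkeeping for (a), a Menger-type cut count for (b), and for (c) the transfer of $\tau'_j$-respecting paths using the observation that a path whose first arc lies in $S_j^1$ cannot continue into $Q'_j$, together with the $e_{vw},f_{vw}$ detour for paths starting with an $S_j^2$ arc. The only (immaterial) difference is in (b), where you bound the cuts of $H_j$ directly via the degree-balance identity and the arcs $f_{vw}$, whereas the paper transfers the equality $\lambda_{D'_j}(s_j,t_j)=d^-_{D'_j}(t_j)$ of Claim \ref{lg}(b) to $H_j$ through the identity $d^-_{H_j}(X)=d^-_{D'_j}(X)-|R_j^1|$, valid for every $X$ with $t_j\in X\subseteq U_j\cup t_j$.
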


\begin{proof}
(a) By Claim \ref{lg}(a), $D'_j$ is  almost Eulerian. Then, by $\delta_{D'_j}(t_j)=\emptyset,$ $D'_j$ is pre-flow. By deleting from $D'_j$ the arcs $s_jv$ and $vt_j$ that correspond to $\mu'_v(vw)$ and $vw$ for all  $vw\in R _j^1 $, we decreased the in-degree and the out-degree of each vertex by the same value so the directed graph obtained this way remained pre-flow. By replacing $s_jv$ and $vt_j$ that correspond to $\mu'_v(vw)$ and $vw$ for all  $vw\in R _j^2$ by $s_jt_j$ and $t_jv,$ we may decrease the out-degrees of the vertices in $Q'_j$ but the in-degrees remained unchanged. 
Further, $d^+_{H_j}(t_j)=d^+_{D'_j}(t_j)+|F_j|=|E_j|\le d^-_{H_j}(t_j).$
%Here we added some arcs $t_jv$ leaving $t_j$ but for each of them  we added a new arc $s_jt_j$ entering $t_j$. 
It follows that $H_j$ is pre-flow.

(b) Note that for all $t_j\in X\subseteq U_j\cup t_j,$ $d^-_{H_j}(X)=d^-_{D'_j}(X)-|R _j^1 |.$ Then, by Claim \ref{lg}(b), we have 
$d^-_{H_j}(t_j)\ge\lambda_{H_j}(s_j,t_j)\ge\lambda_{D'_j}(s_j,t_j)-|R _j^1 |=d^-_{D'_j}(t_j)-|R _j^1 |=d^-_{H_j}(t_j)$ and (b) follows. 

(c) On the one hand, by deleting the arcs corresponding to $\rho_{S_j^1}(v)$, we destroyed at most $d^-_{S_j^1}(v)$ $\tau'_j$-respecting $(s_j,v)$-paths in $N'_j$ and we did not destroy  a  $\tau'_j$-respecting $(s_j,u)$-path in $N'_j$ for $u\in U_j\setminus v$ because each arc in $Q'_j$ has $\tau'_j$-value  $\tau'_j(Q'_j)$ and each arc in $\rho_{S_j^1}(v)$ has $\tau'_j$-value strictly larger than $\tau'_j(Q'_j).$ On the other hand, if a $\tau'_j$-respecting $(s_j,u)$-path $P$ contains $s_jv$ (corresponding to $\mu'_v(vw)$ for some $vw\in R _j^2$) in $N'_j$ then $P-s_jv+e_{vw}+f_{vw}$ is a directed $(s_j,u)$-path in $H_j.$ These arguments imply (c).
\end{proof}
\medskip

By Claim \ref{lambda}(a) and Theorem \ref{bjfj},  there exists a packing ${\cal B}_j$ of $s_j$-arborescences {\boldmath$T_j^i$} in $H_j$, each vertex $v\in U_j\cup t_j$ belonging to  $\lambda_{H_j}(s_j,v)$ of them.  Let us choose such a packing {\boldmath${\cal B}_j$} that minimizes the size of the set {\boldmath$F_{{\cal B}_j}$} of the arcs $f_{vw}\in F_j$ such that an arborescence {\boldmath$T^{f_{vw}}_j$} in ${\cal B}_j$ contains $f_{vw}$ but not $e_{vw}$. 
%Let us denote by {\boldmath$T^{f_{vw}}_j$} the $s_j$-arborescence in ${\cal B}_j$ that contains $f_{vw}.$

\begin{claim}\label{sbt}
${\cal B}_j$ satisfies the following.
\begin{itemize}
	\item [(a)]  $d^+_{H_j}(s_j)=|{\cal B}_j|=d^-_{H_j}(t_j),$ 
	\item [(b)] $F_{{\cal B}_j}=\emptyset$,
	\item [(c)] $\{T^i_j-s_j-t_j:T^i_j\in {\cal B}_j\}$ is a packing of arborescences in $Q_j'$, each vertex $v\in U_j$ belonging to $\lambda_{H_j}(s_j,v)$ of them.
\end{itemize}
\end{claim}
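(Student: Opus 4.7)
The proof will split into the three parts. Part (a) is a double-counting argument, part (b) is the heart of the claim and uses minimality together with an uncrossing swap, and part (c) follows by analysing the now-rigid structure of each $T_j^i$.

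For (a), by Claim \ref{lambda}(b), $t_j$ belongs to $\lambda_{H_j}(s_j,t_j)=d^-_{H_j}(t_j)$ arborescences of $\mathcal{B}_j$, which gives $|\mathcal{B}_j|\ge d^-_{H_j}(t_j)$; arc-disjointness of these arborescences (each using a distinct arc of $\rho_{H_j}(t_j)$) yields the reverse inequality, so every arborescence contains $t_j$ and every arc of $\rho_{H_j}(t_j)$ is used exactly once. Tracking degrees through the construction of $H_j$ from the almost Eulerian graph $D'_j$ (Claim \ref{lg}(a)) gives
\[
d^+_{H_j}(s_j)=d^+_{D'_j}(s_j)-|R_j^1|=d^-_{D'_j}(t_j)-|R_j^1|=d^-_{H_j}(t_j),
\]
and since each arborescence uses at least one arc leaving $s_j$ to reach $t_j$ we also get $|\mathcal{B}_j|\le d^+_{H_j}(s_j)$, completing (a). A by-product is that each $T_j^i$ uses exactly one arc leaving $s_j$.

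For (b), I argue by contradiction. Assume $F_{\mathcal{B}_j}\ne\emptyset$, pick $f_{vw}\in F_{\mathcal{B}_j}$, and by (a) let $i'$ be the unique index with $e_{vw}\in T_j^{i'}$ and $i\ne i'$ the index with $f_{vw}\in T_j^i$. Let $a_0$ be the arc of $T_j^i$ entering $t_j$, so $a_0\ne e_{vw}$. I propose the swap
\[
T_j^{i,\mathrm{new}}:=(T_j^i\setminus\{a_0\})\cup\{e_{vw}\},\qquad T_j^{i',\mathrm{new}}:=(T_j^{i'}\setminus\{e_{vw}\})\cup\{a_0\},
\]
leaving the other arborescences unchanged. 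The first modification is always a valid $s_j$-arborescence on the same vertex set, because $e_{vw}=s_jt_j$ reattaches the subtree at $t_j$ directly to $s_j$. For the second, one splits into the subcase $a_0\in E_j$ (where $a_0$ is also an $s_jt_j$ arc, so the swap is just a relabelling) and the subcase $a_0=yt_j$ with $y\in U_j$ (where one must verify that $y$ remains reachable from $s_j$ in $T_j^{i'}\setminus\{e_{vw}\}$). A careful bookkeeping of the pair $(e_{v'w'},f_{v'w'})$ associated with $a_0$ shows that $|F_{\mathcal{B}_j}|$ strictly decreases, contradicting minimality; in the obstructive subcase where the straightforward swap would reintroduce a bad pair, one exploits the strong connectivity of $Q'_j$ to replace $f_{vw}$ inside $T_j^i$ by an internal arc of $Q'_j$ entering $v$ from another vertex of $T_j^i$, again yielding a strict decrease. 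The main obstacle is designing and verifying the swap in this last subcase.

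For (c), using (b) and the fact that $t_j$ has in-degree one in each $T_j^i$, at most one $e_{v'w'}$ (and hence at most one arc of $F_j$) lies in $T_j^i$, and when such an $f_{vw}\in T_j^i$ exists, the arc entering $t_j$ in $T_j^i$ is forced to be $e_{vw}=s_jt_j$. Combined with the consequence of (a) that $s_j$ has exactly one child in $T_j^i$, two cases arise: either $t_j$ is a leaf of $T_j^i$ whose entering arc comes from $s_j$'s unique child in $U_j$, or $t_j$'s unique entering arc is $s_jt_j$ itself and $t_j$ has the single $U_j$-child $v$. In both cases $T_j^i-s_j-t_j$ is a single arborescence in $Q'_j$ (possibly empty) rooted at the appropriate vertex of $U_j$. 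Arc-disjointness inside $Q'_j$ is inherited from $\mathcal{B}_j$, and the vertex-multiplicity assertion follows by restriction.
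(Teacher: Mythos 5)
Parts (a) and (c) of your proposal are sound and essentially coincide with the paper's argument (the paper is terser on (c)). The problem is part (b), which you yourself flag as containing an unverified ``main obstacle''; this obstacle is a genuine gap, and your proposed repairs do not close it. Concretely, the two-arc swap $T_j^{i',\mathrm{new}}=(T_j^{i'}\setminus\{e_{vw}\})\cup\{a_0\}$ fails in both of its hard subcases. If $a_0=yt_j$ with $y\in U_j$, then $a_0$ cannot in general be transplanted: the arborescences in ${\cal B}_j$ are not spanning, so $y$ need not even be a vertex of $T_j^{i'}$, and when it is, it may lie in the subtree of $T_j^{i'}$ hanging below $t_j$, so the result is not an arborescence. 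Your fallback via the strong connectivity of $Q'_j$ is not a proof: the internal arc of $Q'_j$ entering $v$ that you want to substitute for $f_{vw}$ may already be used by another arborescence of the packing, the substitution must preserve the arborescence property of $T^{i}_j$, and it must keep every vertex in exactly $\lambda_{H_j}(s_j,\cdot)$ arborescences, or else the minimality of $|F_{{\cal B}_j}|$ cannot be invoked. If instead $a_0=e_{v'w'}\in E_j$ with $f_{v'w'}\in T_j^i$, the relabelling removes $f_{vw}$ from $F_{{\cal B}_j}$ but inserts $f_{v'w'}$ into it, so the count does not strictly decrease and your ``careful bookkeeping'' claim is false in this subcase.

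The paper avoids both obstructions by a different case split. If some $T^{e_{vw}}_j$ with $e_{vw}\in E_{{\cal B}_j}$ consists of the single arc $e_{vw}$, it does not move $e_{vw}$ at all: it detaches the $v$-arborescence $T''_j$ hanging below $f_{vw}$ from $T^{f_{vw}}_j$ and re-roots it as $e_{vw}+f_{vw}+T''_j$, which preserves all coverage counts and strictly decreases $|F_{{\cal B}_j}|$. Otherwise every such $T^{e_{vw}}_j$ contains a further arc, which must be some arc of $F_{{\cal B}_j}$ leaving $t_j$; a counting argument then shows each of these arborescences contains exactly one arc of $E_{{\cal B}_j}$ and one of $F_{{\cal B}_j}$, and one simply permutes the arcs of $E_{{\cal B}_j}$ (all parallel copies of $s_jt_j$) among the arborescences. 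Since only mutually parallel arcs are exchanged, every resulting digraph is automatically an arborescence on the same vertex set, which is precisely what your swap cannot guarantee. You would need to adopt some version of this case distinction to make (b) work.
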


\begin{proof}
(a) By Claim \ref{lambda}(b), $t_j$ belongs to $\lambda_{H_j}(s_j,t_j)=d^-_{H_j}(t_j)$ of the $s_j$-arborescences in ${\cal B}_j$. Thus  each arc entering $t_j$ belongs to some $s_j$-arborescence in ${\cal B}_j$ and $d^-_{H_j}(t_j)\le |{\cal B}_j|$. Moreover, by construction and since $D'_j$ is almost Eulerian, we have $d^-_{H_j}(t_j)=d^-_{D'_j}(t_j)-|R_j^1|=d^+_{D'_j}(s_j)-|S_j^1|=d^+_{H_j}(s_j)\ge |{\cal B}_j|,$ and (a) follows.
%Thus different arcs leaving $s_j$ belong to different $s_j$-arborescence in ${\cal B}$ and 
%In particular, each arc $a_{vw}\in K_j$ belongs to an $s_j$-arborescence in ${\cal B}$. 

(b) Suppose that $F_{{\cal B}_j}\neq\emptyset$. Let {\boldmath$E_{{\cal B}_j}$} $=\{e_{vw}:f_{vw}\in F_{{\cal B}_j}\}.$ By (a), every $e_{vw}\in E_{{\cal B}_j}$ is contained in an $s_j$-arborescence {\boldmath$T^{e_{vw}}_j$} in ${\cal B}_j.$

First suppose that for some {\boldmath$e_{vw}$} $\in E_{{\cal B}_j},$ $T^{e_{vw}}_j$ contains only  the arc $e_{vw}$. 
%Let {\boldmath$T_j^i$} be the $s_j$-arborescence in ${{\cal B}_j}$ containing $a'_{vw}$. 
Note that $T^{f_{vw}}_j-f_{vw}$ consists of an $s_j$-arborescence $T'_j$ and a $v$-arborescence $T''_j$. Let {\boldmath${{\cal B}'_j}$} be obtained from ${{\cal B}_j}$ by replacing $T^{f_{vw}}_j$ by $T'_j$ and $T^{e_{vw}}_j$ by $e_{vw}+f_{vw}+T''_j.$ Then ${{\cal B}'_j}$ is a packing of $s_j$-arborescences in $H_j$ such that  each vertex $v\in U_j\cup  t_j$ belongs to  $\lambda_{H_j}(s_j,v)$ of them. Moreover, $f_{vw}$ and $e_{vw}$ belong to the same $s_j$-arborescence in ${{\cal B}'_j}$, that is $|F_{{\cal B}'_j}|<|F_{{\cal B}_j}|$ and we have a contradiction.

We may hence suppose that for every $e_{vw}\in E_{{\cal B}_j},$ $T^{e_{vw}}_j$ contains another arc, so by (a), contains an arc in $F_{{\cal B}_j}.$ Let ${{\cal B}'_j}$ be the set of those $s_j$-arborescences in ${{\cal B}_j}$ that contain an arc of $F_{{\cal B}_j}.$ Then $|F_{{\cal B}_j}|=|E_{{\cal B}_j}|\le |{{\cal B}'_j}|\le |F_{{\cal B}_j}|.$ Hence we have equality everywhere. It follows that every $s_j$-arborescences in ${{\cal B}'_j}$ contains exactly one arc from both $F_{{\cal B}_j}$ and $E_{{\cal B}_j}.$ Then for every $f_{vw}\in F_{{\cal B}_j}$, $T^{f_{vw}}_j$ contains an arc $e_{v'w'}\in E_{{\cal B}_j}$. Let {\boldmath${{\cal B}''_j}$} be obtained from ${{\cal B}_j}$ by replacing $e_{v'w'}$  by $e_{vw}\in E_{{\cal B}_j}$ in $T^{f_{vw}}_j$ for every $f_{vw}\in F_{{\cal B}_j}$. Then ${{\cal B}''_j}$ is a packing of $s_j$-arborescences in $H_j$ such that  each vertex $v\in U_j\cup t_j$ belongs to  $\lambda_{H_j}(s_j,v)$ of them. Moreover, $F_{{\cal B}''_j}=\emptyset$ and we have a contradiction.

(c) follows from the definition of ${\cal B}_j$, (a) and (b).
\end{proof}
\medskip

We now finish the definition of $\mu''_j$. Let $vw\in R _j^3$. Then $vw$ corresponds in $H_j$ to an arc {\boldmath$g_{vw}$} $=vt_j$ entering $t_j$. By Claim \ref{sbt}(a), $g_{vw}$ belongs to an $s_j$-arborescence {\boldmath$T^{g_{vw}}_j$} in ${{\cal B}_j}$.
Let us define {\boldmath$\mu''_j$}$(vw)\in S _j^3$ to be the arc $xq''_j$ of $D''$ that corresponds to the arc $s_ju$ in $H_j$ of the unique $(s_j,t_j)$-path of $T^{g_{vw}}_j$. Then  $\tau''_j(vw)=\tau'_j(vw)\ge \tau'_j(Q'_j)\ge\tau'_j(xq''_j)=\tau''_j(\mu''_j(vw))$ for all $vw\in R _j^3$.
\medskip

By the definition of $\mu''_j$ and Claim \ref{partitions}, we have a $\tau''$-respecting bijection $\mu''_j$ from $\delta_{D''}(q''_j)$ to $\rho_{D''}(q''_j)$ for all $2\le j\le \ell-1.$ Recall that $D''$ is acyclic and almost Eulerian.
Then, by Proposition \ref{almeul} and $d^+_D(s)=d^+_{D''}(s)$, $D''$ decomposes into  $\tau''$-respecting $(s,t)$-paths {\boldmath $P_1,\dots,P_{d^+_D(s)}$} such that each vertex $q''_j\ne s$ belongs to $d^-_{D''}(q''_j)$ 
of them. These paths can be extended, using from Claim \ref{sbt}(c) the arborescences $T_j^i-s_j-t_j$ in $Q'_j$ for $1\le i\le d^+_{H_j}(s_j)$ and $2\le j\le \ell-1$, to get $s$-arborescences in $D'$ such that each vertex $v\in V$ belongs to $\lambda_{H_j}(s_j,v)+d^-_{S_j^1}(v)\ge\lambda_{N'_j}(s_j,v)\ge\lambda_{N'}(s,v)$ of them, by Claims \ref{lambda}(b) and \ref{lg}(c). Since the directed paths $P_1,\dots,P_{d^+_D(s)}$ are $\tau''$-respecting, that is $\tau'$-respecting and $D'$ is consistent, the arborescences constructed are $\tau'$-respecting. Hence $N'$ has a packing of $\tau'$-respecting $s$-arborescences {\boldmath $T'_1,\dots,T'_{d^+_D(s)}$} such that each vertex $v$ of $D'$ distinct from $s$ and $t$ belongs to  $\lambda_{N'}(s,v)=\lambda_{N}(s,v)$ of them, and hence 
$\{${\boldmath $T_1$} $=T'_1-t,\dots,$ {\boldmath $T_{d^+_D(s)}$} $=T'_{d^+_D(s)}-t\}$
% for all $1\le i\le  d^+_D(s)$
is a packing of $\tau$-respecting $s$-arborescences such that each vertex $v$ of $D$ distinct from $s$ belongs to  $\lambda_{N}(s,v)$ of them.
\end{proof}

\section{Arc-disjoint spanning time-respecting arborescences}\label{adstra}

Edmonds' arborescence packing theorem  \cite{ed} states that $k$-root-connectivity from $s$ implies the existence of a packing of $k$ spanning $s$-arborescences. The following observation of  \cite{keklku} shows that the natural extension of Edmonds theorem for $k=1$ is true for temporal networks.

\begin{theo}[\cite{keklku}]\label{1arb}
Any $\tau$-respecting root-connected temporal network $N=((V\cup s,A),\tau)$ contains a spanning $\tau$-respecting $s$-arborescence. 
\end{theo}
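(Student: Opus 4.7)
The plan is to build the arborescence greedily by ordering the vertices according to a natural arrival-time parameter. For each $v\in V$, I would define $t(v)$ to be the minimum, over all $\tau$-respecting $(s,v)$-paths $P$, of the $\tau$-value of the last arc of $P$; this is well-defined because $N$ is $\tau$-respecting root-connected. I would set $t(s)=-\infty$ by convention, and order the vertices of $V$ as $v_1,\dots,v_{|V|}$ with $t(v_1)\le t(v_2)\le\dots\le t(v_{|V|})$.

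Starting from $F=(\{s\},\emptyset)$, I would process the vertices in this order. For $v_i$, I would pick a $\tau$-respecting $(s,v_i)$-path $P$ whose last arc $uv_i$ satisfies $\tau(uv_i)=t(v_i)$ and add the arc $uv_i$ to $F$. Two things need verification: (i) $u$ is already a vertex of $F$ when $v_i$ is processed, and (ii) $F$ remains a $\tau$-respecting $s$-arborescence.

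For (i), the key observation is that the prefix $P-uv_i$ is a $\tau$-respecting $(s,u)$-path whose last arc has $\tau$-value at most $\tau(uv_i)=t(v_i)$, so by definition $t(u)\le t(v_i)$ and $u$ was processed earlier. For (ii), each added vertex acquires in-degree exactly $1$ and the construction cannot create a directed cycle, so $F$ is an $s$-arborescence. To show the unique $(s,v_i)$-path in $F$ is $\tau$-respecting, I would argue inductively: this path is the $(s,u)$-path in $F$ extended by $uv_i$. By induction the $(s,u)$-path in $F$ is $\tau$-respecting; its last arc is the one via which $u=v_j$ (for some $j<i$) was introduced, and that arc has $\tau$-value $t(u)\le t(v_i)=\tau(uv_i)$, so the extended path remains $\tau$-respecting.

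The only modest obstacle is verifying point (i), which reduces to the observation that any prefix of a $\tau$-respecting path is itself $\tau$-respecting, yielding the comparison $t(u)\le t(v_i)$ that guarantees $u$ has already been inserted. Everything else is bookkeeping along the chosen ordering.
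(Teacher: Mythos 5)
The paper states this theorem as a cited result from \cite{keklku} and gives no proof of its own, so I am judging your argument on its merits. Your overall strategy (greedy insertion by earliest arrival time $t(v)$) is the right one, but step (i) contains a genuine gap: from $t(u)\le t(v_i)$ you conclude that $u$ was processed before $v_i$, and this fails when $t(u)=t(v_i)$, since your ordering breaks ties arbitrarily. Concretely, take $V=\{a,b\}$ with arcs $sa$ and $ab$, both of $\tau$-value $1$. Then $t(a)=t(b)=1$, so $v_1=b,\ v_2=a$ is a legitimate ordering under your rule; when you process $b$, the only $\tau$-respecting $(s,b)$-path ends with the arc $ab$, whose tail $a$ is not yet in $F$, and the construction gets stuck. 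So the proof as written is incorrect, and the missing piece is precisely the point you flagged as ``the only modest obstacle.''

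The gap is repairable, but it needs an extra idea rather than bookkeeping. One fix: order the vertices lexicographically by $(t(v),\ell(v))$, where $\ell(v)$ is the minimum, over $\tau$-respecting $(s,v)$-paths whose last arc attains $t(v)$, of the length of the terminal block of arcs with $\tau$-value exactly $t(v)$. For a path $P$ attaining $\ell(v_i)$ with last arc $uv_i$, either the penultimate arc has $\tau$-value strictly less than $t(v_i)$, whence $t(u)<t(v_i)$, or $P-uv_i$ witnesses $t(u)=t(v_i)$ together with $\ell(u)\le\ell(v_i)-1$; in both cases $u$ precedes $v_i$. Another fix is to grow the vertex set $S$ of $F$ one vertex at a time: pick any $v\notin S$, a path attaining $t(v)$, and its first arc $uw$ leaving $S$; if $\tau(uw)=t(w)$ then $w$ can be attached via $uw$ (the prefix up to $u$ shows $t(u)\le\tau(uw)$), and otherwise $t(w)<\tau(uw)$ and one recurses on $w$ with a strictly smaller arc label, which must terminate. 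With (i) secured in either way, your part (ii) goes through as you wrote it.
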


%We leave the straightforward (but surprisingly not completely trivial proof) to the reader.
The authors of \cite{keklku} show that high time-respecting root-connectivity of a temporal network does not imply the existence of 2 arc-disjoint spanning time-respecting arborescences. 

\begin{theo}[\cite{keklku}]\label{kextemp}
For all $k\in \mathbb{N}^+$, there exist  temporal networks $N=((V\cup s,A),\tau)$ such that $\lambda_{N}(s,v)\ge k$ for all $v\in V$ and no packing of $2$ spanning $\tau$-respecting $s$-arborescences exists in $N.$
\end{theo}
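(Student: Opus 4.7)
The plan is to follow the construction of \cite{keklku}, reformulated through the hypergraph-coloring correspondence announced at the start of this section. First I pick a hypergraph $\mathcal{H}=(V_0,\mathcal{E})$ that admits no proper $2$-coloring and has enough uniformity and regularity for the connectivity bookkeeping. A canonical choice is the complete $r$-uniform hypergraph on $2r-1$ vertices for a parameter $r=r(k)$ to be tuned: by pigeonhole every $2$-coloring leaves a color class of size at least $r$ and hence a monochromatic hyperedge, and vertex-transitivity makes every vertex lie in the same large number $r'=\binom{2r-2}{r-1}$ of hyperedges.

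Next I encode $\mathcal{H}$ as a temporal network $N_\mathcal{H}=((V\cup s,A),\tau)$ whose vertex set consists of $s$, the vertices of $V_0$, and auxiliary ``port'' vertices (one or two per hyperedge). The time labels are arranged so that every $(s,v)$-path is automatically $\tau$-respecting---say all arcs from $s$ to ports are labelled $0$ and all port-to-$V_0$ arcs labelled $1$---so the temporal problem collapses to a static arborescence-packing problem on an auxiliary digraph. The gadget is designed so that spanning $\tau$-respecting $s$-arborescences correspond to Steiner $s$-arborescences with terminal set $V_0$, and packings of $2$ arc-disjoint such Steiner arborescences are in bijection with proper $2$-colorings of $\mathcal{H}$: the color of each $v\in V_0$ is read off from which ``side'' of the port-gadget of some hyperedge containing $v$ is used by each of the two arborescences. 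Time-respecting $k$-root-connectivity is ensured by giving the $s$-to-port arcs enough parallel copies and by choosing $r$ large enough that the $r'$-regularity of $\mathcal{H}$ provides at least $k$ arc-disjoint time-respecting $(s,v)$-paths for every $v\in V_0$.

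The conclusion is then immediate: the non-$2$-colorability of $\mathcal{H}$ rules out, via this correspondence, any packing of $2$ spanning $\tau$-respecting $s$-arborescences in $N_\mathcal{H}$. The main obstacle is the middle step: designing the port-gadget and the time labels so that the correspondence between $2$-arborescence packings and proper $2$-colorings is genuinely a bijection---the multiplicities must be calibrated so that the coloring read off from any packing is forced to be proper on every hyperedge rather than only on some of them. This is the technical heart of the modified construction announced by the authors, and it is where the explicit temporal network of the theorem takes its final form.
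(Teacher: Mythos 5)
Your high-level plan --- start from a hypergraph with no proper $2$-coloring, turn it into a Steiner-arborescence-packing obstruction, then into a temporal network --- is exactly the route the paper takes (Theorem \ref{2kex} and the construction of $N_k=(D_k^*,\tau_k^*)$), and your choice of the complete $r$-uniform hypergraph on $2r-1$ vertices is a legitimate non-$2$-colorable input. But the gadget you sketch does not work, and the step you explicitly defer as ``the technical heart'' is precisely the step that cannot be filled in as you have set it up. You orient the construction the wrong way: you make the hypergraph vertices $V_0$ the terminals and the hyperedge gadgets intermediate ($s\rightarrow\text{ports}\rightarrow V_0$). In the paper it is the reverse: the hypergraph vertices are adjacent to $s$ (one arc $sv$ each) and the hyperedge vertices are the sinks; a packing of two Steiner arborescences with the hyperedge vertices as terminals then forces each hyperedge to contain a vertex $v$ with $sv\in F_1$ and another with $sv\in F_2$, i.e.\ a proper $2$-coloring. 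With your orientation each $v\in V_0$ has in-degree equal to the number of hyperedges containing it, there is no single arc whose ownership defines its color, and nothing forces the coloring you try to read off to be well defined, let alone proper.

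The second problem is fatal on its own: you arrange the labels ``so that every $(s,v)$-path is automatically $\tau$-respecting'' and the problem ``collapses to a static arborescence-packing problem.'' If every path is time-respecting, then spanning time-respecting arborescences are just spanning arborescences, and by Edmonds' theorem a $2$-root-connected digraph always admits two arc-disjoint spanning arborescences --- so your network must either fail $\lambda_N(s,v)\ge k$ or fail the non-packing conclusion. The entire role of the time labels is to defeat Edmonds' theorem: in the paper, each $v\in V_k$ gets one arc $sv$ of label $1$ plus $k-1$ parallel arcs of label $2$, while all arcs into the hyperedge vertices have label $1$. The label-$2$ arcs raise $\lambda_N(s,v)$ to $k$ but can never be continued to a hyperedge vertex in a time-respecting way, so any spanning time-respecting arborescence must reach every hyperedge vertex through the unique label-$1$ arcs $sv$, and Theorem \ref{2kex} applies. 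This asymmetry of labels is the missing idea; without it your reduction is internally inconsistent, since it cannot simultaneously ``collapse to a static problem'' and ``correspond to Steiner arborescences with terminal set $V_0$.''
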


Their construction contains directed cycles but it can be easily modified to get an acyclic example. This acyclic example for $k=2$ is presented in Figure 2 in \cite{kamkaw}. %to show that in Theorem \ref{kaka} one can not delete the pre-flow condition on the network.
\medskip

We now relate the spanning time-respecting arborescence packing problem to known problems, namely the Steiner arborescence packing problem and the hypergraph proper 2-coloring problem.
To do that we explain how the above mentioned modified construction can be obtained in 3 steps. 
First, take a $k$-uniform hypergraph without proper $2$-coloring. Then construct a directed graph that is Steiner $k$-root-connected without 2 arc-disjoint Steiner arborescences. Finally, construct an acyclic temporal network that is time-respecting $k$-root-connected without $2$ arc-disjoint spanning time-respecting arborescences. 
\medskip

%We present three constructions for $r$-uniform hypergraphs without proper $2$-coloring. The first is due to We will hence obtain three constructions for acyclic temporal networks that are $r$ time-respecting root-connected without $2$ arc-disjoint spanning time-respecting arborescences. We will point out that our construction provides the smallest examples.

%\subsection{Uniform hypergraphs without proper $2$-colorings}

There exist many constructions for $k$-uniform hypergraphs without proper $2$-coloring, see \cite{am}, \cite{keklku} and Exercice 13.45(b) of \cite{lovasz}. We mention that, by a result of Erd\H os \cite{erdos}, all examples contain exponentially many hyperedges in $k.$

\begin{theo}\cite{erdos}\label{exp}
Any $k$-uniform hypergraph without a proper $2$-coloring contains at least $2^{k-1}$ hyperedges.
\end{theo}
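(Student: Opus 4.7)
The plan is to prove the contrapositive by the standard probabilistic method of Erd\H os. Assume $\mathcal{H}=(V,\mathcal{E})$ is a $k$-uniform hypergraph with $|\mathcal{E}|<2^{k-1}$, and exhibit a proper $2$-coloring of $V$ by a random choice.

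Concretely, I would color each vertex of $V$ independently red or blue, each with probability $\tfrac{1}{2}$. For a fixed hyperedge $E\in\mathcal{E}$, since $|E|=k$, the probability that $E$ is monochromatic equals $2\cdot(1/2)^k=2^{1-k}$. By the union bound, the probability that some hyperedge is monochromatic is at most $|\mathcal{E}|\cdot 2^{1-k}<2^{k-1}\cdot 2^{1-k}=1$. Hence with positive probability no hyperedge is monochromatic, so a proper $2$-coloring exists. Contrapositively, if $\mathcal{H}$ has no proper $2$-coloring then $|\mathcal{E}|\ge 2^{k-1}$.

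There is no real obstacle here: the argument is the textbook one-line union-bound calculation, and the only care needed is to keep the strict inequality $|\mathcal{E}|<2^{k-1}$ so that the union bound stays strictly below $1$, which is what guarantees the existence of a good coloring. No additional structural assumption on $\mathcal{H}$ is used.
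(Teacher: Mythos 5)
Your proof is correct and is exactly the standard probabilistic argument of Erd\H os for this bound; the paper itself gives no proof (it cites \cite{erdos}), and your union-bound calculation is the argument from that source. Nothing is missing.
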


%We mention that, by combining a result of \cite{alon} and \cite{thomassen}, one gets an interesting related result, namely,  for $k\ge 4$,  every $k$-uniform $k$-regular hypergraph has a proper 2-coloring, see also \cite{henning}.

%\subsection{Directed graphs without 2 arc-disjoint Steiner arborescences}

We now show that starting from an arbitrary $k$-uniform hypergraph $\mathcal{H}_k=(V_k, \mathcal{E}_k)$ without proper $2$-coloring how to construct an acyclic directed graph $D_k$ and a vertex set $U_k$ such that $\lambda_{D_k}(s,u)=k$ for all $u\in U_k$ and there exists no packing of two $(s,U_k)$-arborescences in $D_k$.
%\medskip
Let {\boldmath $G_k$} $:=(V_k, U_k; E_k)$ be the bipartite incidence graph of the hypergraph $\mathcal{H}_k$, where the elements of {\boldmath $U_k$} correspond to the hyperedges in $\mathcal{E}_k$. Let {\boldmath $D_k$} $=(V_k\cup U_k\cup s,A_k)$ be obtained from $G_k$ by adding a vertex $s$ and an arc $sv$ for all $v\in V_k$ and directing each edge of $E_k$ from $V_k$ to $U_k.$ By construction $D_k$ is acyclic. Since $\mathcal{H}_k$ is $k$-uniform, we have  $\lambda_{D_k}(s,u)= k$ for all $u\in U_k$.

\begin{theo}\label{2kex}
	$D_k$ has no packing of two $(s,U_k)$-arborescences. 
\end{theo}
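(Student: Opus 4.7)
The plan is to argue by contradiction and extract a proper 2-coloring of $\mathcal{H}_k$ from any hypothetical packing of two $(s,U_k)$-arborescences, contradicting the choice of $\mathcal{H}_k$.

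Suppose $\{T_1,T_2\}$ were such a packing. The key structural observation I would start from is that in $D_k$ the only arc entering any $v\in V_k$ is the arc $sv$: there are no arcs from $U_k$ (which only receives arcs) nor between vertices of $V_k$. Since $T_1$ and $T_2$ are arc-disjoint and each arborescence uses at most one arc entering $v$, the arc $sv$ can be used in at most one of them. Consequently the sets
\[
V_i := \{v\in V_k : sv\in T_i\}, \quad i=1,2,
\]
are disjoint subsets of $V_k$.

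Next I would analyze how each $u\in U_k$ is reached. Since $T_i$ is an $(s,U_k)$-arborescence, $u$ has a unique in-neighbor $v_i(u)$ in $T_i$, and by construction of $D_k$ this in-neighbor lies in $V_k$ and corresponds, via the bipartite incidence, to a vertex of the hyperedge $u$. Moreover $v_i(u)\in V_i$, because the $(s,v_i(u))$-path in $T_i$ must use the unique arc $sv_i(u)$ entering $v_i(u)$. Arc-disjointness of $T_1,T_2$ forces $v_1(u)\neq v_2(u)$, so $u$ (as a hyperedge) contains at least one vertex from $V_1$ and at least one from $V_2$.

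Finally, I would define a $2$-coloring of $V_k$ by giving color $1$ to every vertex in $V_1$ and color $2$ to every other vertex of $V_k$ (so $V_2$, together with the vertices of $V_k$ that appear in neither arborescence, get color $2$). For every hyperedge $u\in\mathcal{E}_k$ we have $v_1(u)\in u$ colored $1$ and $v_2(u)\in u$ colored $2$, so no hyperedge is monochromatic; this is a proper $2$-coloring of $\mathcal{H}_k$, contradicting the hypothesis on $\mathcal{H}_k$.

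There is no real obstacle here; the only subtle point is making sure that $v_1(u)\neq v_2(u)$ (which comes from arc-disjointness applied to the arcs $v_i(u)u$) and ensuring every $v\in V_k$ receives a color, which is handled by lumping the leftover vertices arbitrarily into one of the two color classes.
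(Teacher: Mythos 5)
Your proposal is correct and follows essentially the same route as the paper: from a hypothetical packing $\{F_1,F_2\}$ one colors $v\in V_k$ by $1$ if $sv\in A(F_1)$ and by $2$ otherwise, and since each $u\in U_k$ lies in both arborescences and is entered by distinct arcs with distinct tails in $V_k$, every hyperedge gets both colors, contradicting the non-$2$-colorability of $\mathcal{H}_k$. Your version merely spells out the details (disjointness of $V_1,V_2$, and $v_1(u)\neq v_2(u)$) that the paper leaves implicit.
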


\begin{proof}
Suppose that  there exists a packing of 2 $(s,U_k)$-arborescences $F_1$ and $F_2$ in $D_k.$ Using this packing, we can define a 2-coloring of $V_k$: let $v\in V_k$ be colored by $1$ if $sv\in A(F_1)$ and by $2$ otherwise. Since each vertex in $U_k$ belongs to both $F_1$ and $F_2$, no hyperedge of $\mathcal{E}_k$ is monochromatic, that is the  above defined $2$-coloring of $\mathcal{H}_k$ is proper. This contradicts the fact that $\mathcal{H}_k$ has no proper $2$-coloring.
\end{proof}
\medskip

%\subsection{Acyclic temporal networks without $2$ arc-disjoint spanning time-respecting arborescences}

As a next step, we show that starting from the acyclic directed graph $D_k$ and the vertex set $U_k$, how to construct a temporal network $N_k$ such that $\lambda_{N_k}(s,v)=k$ for all vertices $v$ and no packing of $2$ spanning time-respecting $s$-arborescences exists in $N.$
%\medskip
%
Let us define {\boldmath $N_k$} $:=(D_k^*,\tau_k^*)$ as follows: {\boldmath $D_k^*$} is obtained from $D_k$ by adding the set of arcs {\boldmath $A_k^*$} consisting of $k-1$ parallel arcs from $s$ to all $v\in V_k$ and we define {\boldmath $\tau_k^*$}$(a)=1$ if $a\in A_k$ and $2$ if $a\in A_k^*.$ Note that since $D_k$ is acyclic, so is $D_k^*$. Then a spanning $s$-arborescence $F^*$ of $D_k^*$ is $\tau_k^*$-respecting if and only if $F^*-A_k^*$ is an  $(s,U_k)$-arborescence in $D_k$. Thus a packing of $2$ spanning $\tau_k^*$-respecting $s$-arborescences in $D_k^*$ would provide a packing of $2$  $(s,U_k)$-arborescences in $D_k$. %This argument shows that problems about time-respecting arborescences are more general than those about Steiner arborescences.
Hence, the following result is an immediate consequence of Theorem \ref{2kex}.

 \begin{theo}
For all $k\in \mathbb{N}^+$, there exist acyclic temporal networks $N=((V\cup s,A),\tau)$ such that $\lambda_{N}(s,v)\ge k$ for all $v\in V$ and no packing of $2$ spanning $\tau$-respecting $s$-arborescences exists in $N.$
\end{theo}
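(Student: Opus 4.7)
The plan is to verify that the temporal network $N_k=(D_k^*,\tau_k^*)$ constructed above has all the required properties, relying on Theorem \ref{2kex} (the non-existence of two arc-disjoint $(s,U_k)$-arborescences in $D_k$) to rule out a packing of $2$ spanning $\tau_k^*$-respecting $s$-arborescences. Since $D_k$ is acyclic and $A_k^*$ consists of arcs leaving $s$ (and $s$ has no incoming arcs), $D_k^*$ is also acyclic.

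I would first verify the time-respecting root-connectivity. For $v\in V_k$, the arc $sv\in A_k$ together with the $k-1$ parallel arcs in $A_k^*$ from $s$ to $v$ form $k$ arc-disjoint single-arc (hence $\tau_k^*$-respecting) $(s,v)$-paths, so $\lambda_{N_k}(s,v)\ge k$. For $u\in U_k$, the hyperedge of $\mathcal{H}_k$ corresponding to $u$ provides $k$ vertices $v_1,\dots,v_k\in V_k$, yielding $k$ arc-disjoint paths $s,v_i,u$ in $D_k$; all their arcs lie in $A_k$ and have $\tau_k^*$-value $1$, hence they are $\tau_k^*$-respecting and $\lambda_{N_k}(s,u)\ge k$.

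The key step is the following claim: if $F^*$ is a spanning $\tau_k^*$-respecting $s$-arborescence of $D_k^*$, then $F^*\setminus A_k^*$ is an $(s,U_k)$-arborescence of $D_k$. Indeed, fix $u\in U_k$ and let $sv\dots vu$ be the unique $(s,u)$-path in $F^*$. Its last arc $vu$ lies in $A_k$ and has $\tau_k^*$-value $1$, so by the $\tau_k^*$-respecting property every preceding arc on the path must also have $\tau_k^*$-value $1$, i.e.\ must lie in $A_k$. In particular $sv\in A_k$, so the in-arc of $v$ in $F^*$ lies in $A_k$ as well. Thus removing $A_k^*$ from $F^*$ preserves the in-arcs of every vertex in $U_k$ and of every $V_k$-predecessor of a $U_k$-vertex, and the remaining subgraph is an $s$-arborescence of $D_k$ containing all of $U_k$.

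Given the claim, suppose for contradiction that $F_1^*$ and $F_2^*$ is a packing of two spanning $\tau_k^*$-respecting $s$-arborescences in $D_k^*$. Then $F_1^*\setminus A_k^*$ and $F_2^*\setminus A_k^*$ are arc-disjoint $(s,U_k)$-arborescences in $D_k$, contradicting Theorem \ref{2kex}. The main obstacle is really just isolating this claim; once it is in place, the rest is immediate from the previously established results.
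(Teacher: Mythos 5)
Your proposal is correct and follows essentially the same route as the paper: the paper derives the theorem from Theorem \ref{2kex} via the observation that a spanning $\tau_k^*$-respecting $s$-arborescence of $D_k^*$, restricted to $A_k$, yields an $(s,U_k)$-arborescence of $D_k$, which is exactly the key claim you isolate and prove. Your write-up merely makes explicit the root-connectivity check and the $\tau$-respecting argument that forces the in-arc of each $U_k$-predecessor to lie in $A_k$, both of which the paper leaves implicit.
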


%\subsection{Temporal networks containing $2$ arc-disjoint spanning time-respecting arborescences}

These examples of acyclic temporal networks that are time-respecting $k$-root-connected without $2$ arc-disjoint spanning time-respecting arborescences contain, by Theorem \ref{exp}, exponentially many vertices in $k.$ 
In other words, $k\le log(n)$ where $n$ is the number of vertices.
In the light of this fact, it is natural to ask whether there exist $2$ arc-disjoint spanning time-respecting arborescences in a temporal network if $k$ is linear in $n.$ The examples of Figure \ref{circuitexists1} show that time-respecting $(n-3)$-root-connectivity does not imply the existence of $2$ arc-disjoint spanning time-respecting arborescences.  We propose the first  steps in this direction. 
We  first remark that $n$-root-connectivity is enough.

\begin{claim}
%Let $N=((V\cup s,A),\tau)$ be an $n$-root-connected temporal network on $n\ge 1$ vertices. Then there exists a packing of $2$ spanning $\tau$-respecting $s$-arborescences  in $N.$
Let $N=((V\cup s,A),\tau)$ be a temporal network on $n\ge 1$ vertices such that $\lambda_{N}(s,v)\ge n$ for all $v\in V$. Then there exists a packing of $2$ spanning $\tau$-respecting $s$-arborescences  in $N.$
\end{claim}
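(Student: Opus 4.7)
The plan is a direct two-step application of Theorem~\ref{1arb}. Since $\lambda_N(s,v)\ge n\ge 1$ in particular makes $N$ time-respecting root-connected, Theorem~\ref{1arb} yields a spanning $\tau$-respecting $s$-arborescence $T_1$. The main point is to argue that after removing $A(T_1)$ from $N$, the residual temporal network is still time-respecting root-connected, so that Theorem~\ref{1arb} applied a second time produces a spanning $\tau$-respecting $s$-arborescence $T_2$ in $N-A(T_1)$, giving the desired arc-disjoint packing $\{T_1,T_2\}$.

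The key observation driving the ``root-connectivity survives'' step is a one-line counting argument: a spanning $s$-arborescence on $n$ vertices has exactly $n-1$ arcs. Fix $v\in V$ and let $P_1,\dots,P_n$ be $n$ arc-disjoint $\tau$-respecting $(s,v)$-paths in $N$, which exist by hypothesis. Since the $P_i$ are pairwise arc-disjoint, each of the $n-1$ arcs of $T_1$ lies in at most one $P_i$, so at most $n-1$ of the paths meet $A(T_1)$. Hence at least one $P_i$ remains fully contained in $N-A(T_1)$, and it is still $\tau$-respecting there. The degenerate case $n=1$ has $V=\emptyset$, so both arborescences can be taken to be the trivial empty arborescence on $\{s\}$.

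There is essentially no obstacle in this argument; once one notices that $n-1<n$, the proof writes itself. What makes the statement nontrivial to sharpen, and what makes the finer Theorem~\ref{newresult} genuinely harder, is precisely that this naive counting collapses the moment one replaces $n$ by $n-1$: then $T_1$ could a priori meet every one of the $n-1$ guaranteed $\tau$-respecting $(s,v)$-paths to some vertex $v$, so Theorem~\ref{1arb} can no longer be invoked on $N-A(T_1)$ and a different approach becomes necessary.
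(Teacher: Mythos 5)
Your argument is correct and is essentially identical to the paper's own proof: both apply Theorem~\ref{1arb} once to get a first spanning $\tau$-respecting $s$-arborescence, then use the counting observation that its $n-1$ arcs can destroy at most $n-1$ of the $n$ arc-disjoint $\tau$-respecting $(s,v)$-paths, so the residual network remains time-respecting root-connected and Theorem~\ref{1arb} applies again. Your closing remark about why the count fails at $n-1$ correctly identifies the difficulty that Theorem~\ref{newresult} has to overcome.
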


\begin{proof}
Since $\lambda_{N}(s,v)\ge n\ge 1$ for all $v\in V$, there exists, by Theorem \ref{1arb}, a spanning $\tau$-respecting $s$-arborescence {\boldmath$F$} in $N.$ Further, there exist $n$ arc-disjoint $\tau$-respecting $(s,v)$-paths {\boldmath$P_1^v,\dots, P_{n}^v$}  for all $v\in V$. By deleting the arcs of $F$, we can destroy at most $|A(F)|$ of the $(s,v)$-paths $P_1^v,\dots, P_{n}^v$ for all $v\in V.$ Since $|A(F)|=n-1,$ this implies that $\lambda_{N-A(F)}(s,v)\ge n-(n-1)=1$ for all $v\in V.$ Then, there exists, by Theorem \ref{1arb}, a spanning $\tau$-respecting $s$-arborescence {\boldmath$F'$} in $N-A(F),$ and we are done.
\end{proof}
\medskip

With some effort we can improve the previous result by 1.

\begin{theo}\label{newresult}
Let $N=((V\cup s,A),\tau)$ be a temporal network on $n\ge 2$ vertices such that $\lambda_{N}(s,v)\ge n-1$ for all $v\in V$. Then there exists a packing of $2$ spanning $\tau$-respecting $s$-arborescences  in $N.$
\end{theo}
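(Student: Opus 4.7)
The plan is to apply Theorem \ref{1arb} twice: first, extract a spanning $\tau$-respecting $s$-arborescence $F_1$ of $N$; second, extract a spanning $\tau$-respecting $s$-arborescence $F_2$ of $N - A(F_1)$. The preceding claim succeeded because of the slack $\lambda_N(s,v) - |A(F_1)| = n - (n-1) = 1$; here the slack is $0$, so $F_1$ must be chosen with care to keep $N - A(F_1)$ time-respecting root-connected.

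I would select $F_1$ so as to minimize $|B(F_1)|$, where $B(F) := \{v \in V : \lambda_{N-A(F)}(s,v) = 0\}$. Assume for contradiction $B(F_1) \neq \emptyset$ and fix $v \in B(F_1)$ together with $n-1$ arc-disjoint $\tau$-respecting $(s,v)$-paths $P_1^v, \dots, P_{n-1}^v$. Since $|A(F_1)| = n-1$ and every $P_i^v$ must contain an arc of $F_1$, pigeonhole forces each $P_i^v$ to contain exactly one such arc, so $A(F_1) \subseteq H_v := \bigcup_i A(P_i^v)$. Because no arc of $H_v$ leaves $v$, the vertex $v$ is a leaf of $F_1$; let $e_k = x_k v \in F_1$ be the unique arc of $F_1$ entering $v$, which is the last arc of some $P_k^v$, and for $i \neq k$ let $e_i = x_i v$ be the last arc of $P_i^v$, all of which lie outside $A(F_1)$.

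The exchange step defines $F_1' := F_1 - e_k + e_j$, where $j \neq k$ is chosen to maximize $\tau(e_j)$ over $\{i : i \neq k\}$. Since $v$ is a leaf of $F_1$, $F_1'$ is a spanning $s$-arborescence; to see that $F_1'$ is $\tau$-respecting, note that the arc $f$ of $F_1$ entering $x_j$ lies on some $P_{i'}^v$ with $i' \neq k$ (because its head $x_j \neq v$), so $\tau(f) \le \tau(e_{i'}) \le \tau(e_j)$ by the choice of $j$. Clearly $v \notin B(F_1')$, since $P_k^v$ lies entirely in $N - A(F_1')$.

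The principal obstacle is to verify that no vertex $u \neq v$ that was good for $F_1$ becomes bad for $F_1'$. Given a $\tau$-respecting $(s,u)$-walk $Q$ in $N - A(F_1)$, if $Q$ avoids $e_j$ then $Q \subseteq N - A(F_1')$; otherwise $Q$ enters $v$ via $e_j$ and leaves $v$ via some arc $g$ with $\tau(g) \ge \tau(e_j)$, and I would reroute $Q$ by replacing the prefix of $Q$ up to $v$ with (the prefix of $P_k^v$ from $s$ to $x_k$) followed by $e_k$. The rerouted walk lies in $N - A(F_1')$ and is $\tau$-respecting provided $\tau(e_k) \le \tau(g)$, which holds whenever $\tau(e_k) \le \tau(e_j)$. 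The residual case $\tau(e_k) > \max_{i \neq k} \tau(e_i)$ is the delicate part; I would control it by a secondary minimization on $F_1$ (for instance, minimize $\sum_{e \in A(F_1)} \tau(e)$ among the arborescences attaining the minimum value of $|B|$), so that an $F_1$ whose entering arc at $v$ has strictly larger $\tau$-value than every alternative cannot be optimal: the $\tau$-respecting swap $e_k \to e_j$ would improve the secondary objective without increasing $|B|$, contradicting the choice of $F_1$. In each case one produces $F_1''$ with $|B(F_1'')| < |B(F_1)|$, contradicting the minimality and completing the proof.
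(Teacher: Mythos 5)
Your setup is sound up to a point: minimizing $|B(F_1)|$, the pigeonhole argument giving exactly one arc of $F_1$ on each $P_i^v$, the observation that $v$ is a leaf of $F_1$, and the exchange $F_1'=F_1-e_k+e_j$ with $j$ maximizing $\tau(e_j)$ over $i\ne k$ are all correct, and your rerouting handles every good vertex whenever $\tau(e_k)\le\tau(e_j)$. The gap is the residual case $\tau(e_k)>\max_{i\ne k}\tau(e_i)$, and the secondary minimization does not close it. To contradict the secondary objective you must first know that the swap $e_k\to e_j$ does not increase $|B|$; but the only tool you have for controlling $|B(F_1')|$ is the rerouting argument, which needs $\tau(e_k)\le\tau(g)$ and only guarantees $\tau(g)\ge\tau(e_j)$ --- exactly the inequality that fails here. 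Note that for $i\ne k$ the prefix of $P_i^v$ up to $x_i$ contains the one arc of $F_1$ lying on $P_i^v$, which is not $e_k$ and hence survives in $F_1'$; so $e_k$ is essentially the only entry into $v$ reachable in $N-A(F_1')$, and a good vertex $u$ all of whose witness paths enter $v$ via $e_j$ and leave via an arc $g$ with $\tau(e_j)\le\tau(g)<\tau(e_k)$ can become bad. Two such vertices already give $|B(F_1')|>|B(F_1)|$, so the assertion ``without increasing $|B|$'' is precisely the unproved step, and the argument is circular there.

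The paper sidesteps this difficulty entirely: it never tries to repair $F$ so that $N-A(F)$ stays time-respecting root-connected. It takes an arbitrary first arborescence $F$; if some $u$ is bad, it uses the $n-1$ paths to produce a second arc $xu\ne F(u)$ entering $u$ (the last arc of the path containing the arc $F(w)$ of maximum $\tau$-value over $w\in V-u$) such that $F':=F-F(u)+xu$ is again a spanning $\tau$-respecting $s$-arborescence. It then applies Theorem \ref{1arb} to $N-(A(F)-F(u))$, which is still time-respecting root-connected because only $n-2$ arcs were removed, to obtain $F''$; since $F''$ contains exactly one arc entering $u$, it avoids $F(u)$ or $xu$ and is therefore arc-disjoint from $F$ or from $F'$. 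If you want to keep your approach you need a genuinely new argument for the residual case; otherwise I recommend switching to this ``two candidate arborescences sharing $n-2$ arcs'' device, which needs no extremal choice of $F$ at all.
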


\begin{proof}
Since $\lambda_{N}(s,v)\ge n-1\ge 1$ for all $v\in V$, there exists, by Theorem \ref{1arb}, a spanning $\tau$-respecting $s$-arborescence {\boldmath $F$} in $N.$ Let {\boldmath $F(v)$} be the unique arc of $F$ entering $v$ for all $v\in V.$ Note that $A(F)=\{F(v):v\in V\}.$
If $\lambda_{N-A(F)}(s,v)\ge 1$ for all $v\in V$ then there exists, by Theorem \ref{1arb}, a spanning $\tau$-respecting $s$-arborescence in $N-A(F),$ and we are done. 

Otherwise, $\lambda_{N-A(F)}(s,u)=0$ for some {\boldmath$u$} $\in V.$ 
By assumption, there exist $n-1$ arc-disjoint $\tau$-respecting $(s,u)$-paths {\boldmath$P_1,\dots, P_{n-1}$}.
Then, since $|V|=n-1,$ there exists a bijection {\boldmath$\pi$} from $V$ to $\{1,\dots,n-1\}$ such that $F(v)$ is contained in $P_{\pi(v)}$ for all $v\in V.$ It follows that no arc leaves $u$ in $F.$ Let {\boldmath$w$} $\in V-u$ be a vertex for which $\tau(F(w))$ is maximum. 
Let the last arc of $P_{\pi(w)}$ be denoted by  {\boldmath$xu$}.
 Then, since $F(u)$ is the last arc of the path $P_{\pi(u)}$ and the paths are arc-disjoint, $F(u)\neq xu.$ By the choice of $w$ and since $P_{\pi(w)}$ is $\tau$-respecting, we have $\tau(F(x))\le \tau(F(w))\le \tau(xu).$ We obtain that  {\boldmath$F'$} $:=F-F(u)+xu\neq F$ is also a spanning $\tau$-respecting $s$-arborescence  in $N.$ 
 
By assumption and  $|A(F)-F(u)|=n-2,$ we have $\lambda_{N-(A(F)-F(u))}(s,v)\ge (n-1)-(n-2)=1$ for all $v\in V.$ 
Then, by Theorem \ref{1arb}, there exists a spanning $\tau$-respecting $s$-arborescence {\boldmath$F''$} in $N-(A(F)-F(u)).$ Since $F''$ contains a unique arc entering $u$, it does not contain either $F(u)$ or $xu$. Thus, $F''$ is arc-disjoint from either $F$ or $F',$ and we are done.
\end{proof}
\medskip

We conjecture  that the following is true.

\begin{conj}
Let $N=((V\cup s,A),\tau)$ be an acyclic temporal network on $n\ge 4$ vertices such that $\lambda_{N}(s,v)\ge \frac n2$ for all $v\in V$. Then   a packing of $2$ spanning $\tau$-respecting $s$-arborescences exists in $N.$
\end{conj}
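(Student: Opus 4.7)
The plan is to extend the swap-based approach of Theorem \ref{newresult} to handle multiple simultaneously stuck vertices. First, I would apply Theorem \ref{1arb} to obtain a spanning $\tau$-respecting $s$-arborescence $F$ in $N$. If $\lambda_{N-A(F)}(s,v)\ge 1$ for every $v\in V$, a second application of Theorem \ref{1arb} to $N-A(F)$ finishes the proof. Otherwise, let $U=\{u\in V:\lambda_{N-A(F)}(s,u)=0\}$ be the non-empty set of stuck vertices, and for each $u\in U$ fix $\lceil n/2\rceil$ arc-disjoint $\tau$-respecting $(s,u)$-paths $P_1^u,\ldots,P_{\lceil n/2\rceil}^u$. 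Each such path contains at least one arc of $F$, so each stuck vertex is blocked by at least $\lceil n/2\rceil$ distinct arcs of $F$; since $|A(F)|=n-1<2\lceil n/2\rceil$, the blocking arc sets for different stuck vertices must overlap heavily, which is the structural feature I would exploit.

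The intended procedure is a sequence of swaps on $F$, one per stuck vertex, performed in reverse topological order using the acyclicity of $N$. For the current $u\in U$, I would adapt the single-swap argument of Theorem \ref{newresult}: among the arcs of $F$ that lie on the paths $P_i^u$, pick the one $F(w)$ maximising $\tau(F(w))$, let $xu$ be the last arc of the corresponding path $P_i^u$, and replace $F(u)$ by $xu$. The maximality of $\tau(F(w))$ combined with the $\tau$-respecting nature of $P_i^u$ should yield $\tau(F(x))\le \tau(xu)$, keeping the modified arborescence $\tau$-respecting, while the reverse topological processing should keep earlier swaps untouched.

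After all swaps the resulting arborescence $F^*$ differs from $F$ by the set $R=\{F(u):u\in U\}$ of released arcs (and their corresponding replacements). I would then show that $N-A(F^*)$ is time-respecting root-connected: each $u\in U$ is reached through the suffix of $P_i^u$ starting at $F(u)\in R$, while each non-stuck vertex $v$ admits a $\tau$-respecting $(s,v)$-path either inherited from $N-A(F)$ and avoiding the new arcs $\{xu:u\in U\}$, or reconstructed through an arc of $R$. A final invocation of Theorem \ref{1arb} on $N-A(F^*)$ then produces the required second arborescence.

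The main obstacle is the coordination of the swaps. A single arc of $F$ may simultaneously lie on blocking paths for several stuck vertices, so swapping it out for one may leave another stuck; conversely, a swapped-in arc $xu$ may destroy a $\tau$-respecting path that was crucial for some non-stuck vertex, or collide with the swap chosen for a different stuck vertex. The $\lceil n/2\rceil$ threshold appears tailored to a Hall-type or matroid argument: for any $U'\subseteq U$, the blocking arcs of $U'$ occupy at least $\lceil n/2\rceil$ positions of $F$ each, and the excess $|U'|\cdot\lceil n/2\rceil-(n-1)$ should quantify the flexibility available for swapping. Turning this counting intuition into a rigorous simultaneous swap procedure, while maintaining the time-respecting property throughout, is the crux of the proof and may eventually require a matroid-intersection formulation or an augmenting-structure argument rather than a pure greedy swap.
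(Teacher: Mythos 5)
You should first be aware that this statement appears in the paper only as a conjecture: the authors offer no proof, just the observation that time-respecting $(n-3)$-root-connectivity (indeed $\tfrac{2n}{5}$-root-connectivity in the acyclic case, by the third example of Figure \ref{circuitexists1}) is not sufficient. So there is no argument of theirs to compare yours to, and your text must stand alone as a proof --- which, by your own admission, it does not: you explicitly leave the coordination of the swaps for several stuck vertices as an unresolved ``crux'', possibly requiring a matroid-intersection or augmenting-structure argument you do not supply. As written this is a research plan, not a proof.

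Beyond the acknowledged gap, the single-vertex swap itself already breaks when the connectivity drops from $n-1$ to $\lceil n/2\rceil$. The proof of Theorem \ref{newresult} hinges on the pigeonhole fact that the $n-1$ arc-disjoint $\tau$-respecting $(s,u)$-paths each meet $A(F)$ and $|A(F)|=n-1$, yielding a bijection $\pi$ between $V$ and the paths. From that bijection one deduces two things you can no longer deduce: (i) no arc of $F$ leaves $u$, which is what makes $F-F(u)+xu$ $\tau$-respecting at the successors of $u$ (with only $\lceil n/2\rceil$ paths an arc $uz\in A(F)$ may exist and $\tau(xu)\le\tau(uz)$ is not guaranteed); and (ii) the vertex $w$ maximising $\tau(F(w))$ over \emph{all} of $V$ lies on one of the paths, so that for the last arc $xu$ of that path one gets $\tau(F(x))\le\tau(F(w))\le\tau(xu)$. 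If, as you propose, you instead maximise only over arcs of $F$ lying on the paths $P_i^u$, the predecessor $x$ of $u$ on the chosen path need not have $F(x)$ on any of those paths, and the inequality $\tau(F(x))\le\tau(xu)$ can fail. Finally, the closing connectivity count also collapses: Theorem \ref{newresult} deletes only $n-2$ arcs from an $(n-1)$-connected network, whereas here $|A(F^*)|=n-1\ge 2\lceil n/2\rceil -1$ can meet all $\lceil n/2\rceil$ paths to some vertex, and your claim that a stuck vertex is reached ``through the suffix of $P_i^u$'' is not an argument --- a suffix is not an $(s,u)$-path, and you would still need a $\tau$-respecting prefix in $N-A(F^*)$ reaching its first vertex with a small enough label. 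Each of these points needs a genuinely new idea; the conjecture remains open.
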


The third example  presented in Figure \ref{circuitexists1} is of $5$ vertices, acyclic, time-respecting $2$-root-connected and has no packing of $2$ spanning $\tau$-respecting $s$-arborescences. It follows that time-respecting $\frac {2n}{5}$-root-connectivity is not enough to have a packing of $2$ spanning time-respecting $s$-arborescences in acyclic temporal networks.

\section{Complexity results}

Lov\'asz \cite{lovasz2} proved that the problem of 2-colorings of $k$-uniform hypergraphs is NP-complete. This implies that the problem of packing 2 Steiner arborescences is also NP-complete. An easier way to see this is to use the NP-complete problem of two arc-disjoint directed paths in a directed graph $D$, one from $r$ to $t$ and the other from $t$ to $r.$ (See \cite {hota}.) Construct $D'$ from $D$ by adding a new vertex $s$ and the two arcs $sr$ and $st.$ Then $D$ has an $(r,t)$-path and a $(t,r)$-path that are arc-disjoint if and only if $D'$ has a packing of 2 $(s,\{r,t\})$-arborescences.
This with the construction presented in the previous section finally imply the following. 

\begin{theo}\label{SPNP}
The problem of packing $k$ spanning time-respecting arborescences is NP-complete even for $k=2.$
\end{theo}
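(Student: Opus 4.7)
The plan is to establish NP-hardness via a polynomial reduction from the two-arc-disjoint-paths problem in digraphs cited in the paragraph preceding the theorem, namely: given a digraph $\hat D$ and two vertices $r,t$, decide whether $\hat D$ contains arc-disjoint $(r,t)$- and $(t,r)$-paths (\cite{hota}). Membership in NP is immediate: given two candidate arc sets, check in polynomial time that each is a spanning $s$-arborescence, that they are arc-disjoint, and that for every vertex $v$ the unique $(s,v)$-path in each arborescence is $\tau$-respecting. The reduction will chain the construction in the paragraph preceding the theorem (from arc-disjoint paths to packing two Steiner $(s,\{r,t\})$-arborescences in $D':=\hat D+s+\{sr,st\}$) with the construction from Section \ref{adstra} (from packing two Steiner arborescences to packing two spanning time-respecting arborescences).

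Concretely, given $\hat D$, form $D':=(V(\hat D)\cup s,\, A(\hat D)\cup \{sr,st\})$, then form $D^*$ from $D'$ by adding, for each $v\in V(\hat D)\setminus\{r,t\}$, two parallel arcs from $s$ to $v$; call this new arc set $A^*$. Define $\tau^*(a)=1$ for $a\in A(D')$ and $\tau^*(a)=2$ for $a\in A^*$, and let $N^*=(D^*,\tau^*)$. I will show that $N^*$ has a packing of $2$ spanning $\tau^*$-respecting $s$-arborescences if and only if $\hat D$ has arc-disjoint $(r,t)$- and $(t,r)$-paths.

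For the forward direction, suppose $F_1^*,F_2^*$ is such a packing. No arc of $A^*$ enters $r$ or $t$, so the in-arc of $r$ (resp.\ of $t$) in each $F_i^*$ has $\tau^*=1$; by the time-respecting property, every arc on the $(s,r)$- and $(s,t)$-paths in $F_i^*$ has $\tau^*=1$, so these paths lie entirely in $D'$. Hence $F_i:=F_i^*\cap A(D')$ is an $(s,\{r,t\})$-arborescence in $D'$, and $F_1,F_2$ are arc-disjoint. Then exactly one of $F_1,F_2$ uses $sr$ (and so contains an $(r,t)$-path in $\hat D$) and the other uses $st$ (and so contains a $(t,r)$-path in $\hat D$), as argued in the paragraph preceding the theorem, yielding the desired two arc-disjoint paths.

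For the reverse direction, given arc-disjoint paths $P_1$ from $r$ to $t$ and $P_2$ from $t$ to $r$ in $\hat D$, set $F_1:=\{sr\}\cup A(P_1)$ and $F_2:=\{st\}\cup A(P_2)$, which are arc-disjoint $(s,\{r,t\})$-arborescences in $D'$. Extend each $F_i$ to a spanning $s$-arborescence $F_i^*$ of $D^*$ by adding, for every $v\in V(\hat D)\setminus\{r,t\}$ not already covered by $F_i$, one parallel $sv$-arc from $A^*$; since at most two arborescences need to cover any such $v$ and there are two parallel arcs available, the arcs can be assigned so that $F_1^*$ and $F_2^*$ remain arc-disjoint. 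Each added arc has $\tau^*=2$ and leaves $s$, targeting a vertex which is a leaf of $F_i^*$, so every $(s,v)$-path in $F_i^*$ is either a subpath of $F_i$ (all $\tau^*=1$) or a single $\tau^*=2$ arc $sv$; both are time-respecting. The main obstacle is precisely this extension step: one must guarantee both that enough parallel arcs exist and that no new path of the form ``$\tau^*=2$ arc followed by $\tau^*=1$ arc'' is created, which is why the newly covered vertices must be leaves in $F_i^*$ and why we use two parallels per vertex rather than one. Since the construction is polynomial, this completes the reduction.
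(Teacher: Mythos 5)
Your proof is correct and follows essentially the same route as the paper: reduce the arc-disjoint $(r,t)$- and $(t,r)$-paths problem to packing two $(s,\{r,t\})$-arborescences in $D'$, then lift to spanning time-respecting arborescences by attaching late-labelled parallel arcs from $s$ whose heads become leaves. You also correctly adapt the Section~\ref{adstra} construction by using two parallel $\tau^*=2$ arcs per non-terminal vertex (the paper's $k-1=1$ would not suffice here since these vertices have no label-$1$ arc from $s$), a detail the paper leaves implicit.
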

\medskip

Let us check what happens if  we replace the inequality with equality in the definition of time-respecting directed paths and we consider the values of $\tau$ as colors. Then we get monochromatic directed paths. We may hence study the following problem {\sc MoChPaSpAr}: 

\begin{prob} 
Given a directed graph $D=(Z\cup s,A)$ and a coloring $c$ of the arcs, decide whether there exists a spanning $s$-arborescence containing only monochromatic directed paths.
\end{prob}

We show  that this decision problem is difficult. We will reduce the exact cover in 3-regular 3-uniform hypergraphs  problem ({\sc RXC3}) to our problem.  In {\sc RXC3}, we are given a 3-regular 3-uniform hypergraph $\mathcal{H}=(V,{\cal E})$, and the problem consists of determining whether there exists a subset ${\cal E}'$ of ${\cal E}$ such that each vertex in $V$ occurs in exactly one hyperedge in ${\cal E}'$. Gonzalez proved in \cite{gonz} that {\sc RXC3} is NP-complete.

\begin{theo}\label{NPC}
The problem {\sc MoChPaSpAr} is NP-complete even for acyclic directed graphs and for two colors.
\end{theo}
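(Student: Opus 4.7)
The plan is to reduce {\sc RXC3} to {\sc MoChPaSpAr}. Given a 3-regular 3-uniform hypergraph $\mathcal{H}=(V,\mathcal{E})$, denote by $e_1^v, e_2^v, e_3^v$ the three hyperedges incident to each $v\in V$. I construct the 2-colored acyclic digraph $D=(Z\cup s,A)$ as follows. For each $e\in\mathcal{E}$ add a vertex $u_e$ with two parallel arcs from $s$ to $u_e$, one colored $1$ and one colored $2$; the color of the arc chosen as the parent of $u_e$ in an arborescence will indicate whether $e$ belongs to the cover. For each $v\in V$ add a vertex $x_v$ with the three color-$1$ arcs $u_{e_j^v}x_v$ ($j=1,2,3$), and for each of the three 2-element subsets $\{i,j\}\subseteq\{1,2,3\}$ add a vertex $y_v^{ij}$ with the two color-$2$ arcs $u_{e_i^v}y_v^{ij}$ and $u_{e_j^v}y_v^{ij}$. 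Every arc goes from $s$ to some $u_e$ or from some $u_e$ to an $x_v$ or $y_v^{ij}$, so $D$ is acyclic, and the construction is clearly polynomial.

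For one direction, an exact cover $\mathcal{E}'\subseteq\mathcal{E}$ yields a monochromatic spanning arborescence: the parent arc of $u_e$ is the color-$1$ copy of $su_e$ if $e\in\mathcal{E}'$ and the color-$2$ copy otherwise; the parent of $x_v$ is the unique $u_e$ with $v\in e\in\mathcal{E}'$, giving a color-$1$ path of length $2$; and the parent of $y_v^{ij}$ is any $u_{e_k^v}$ ($k\in\{i,j\}$) with $e_k^v\notin\mathcal{E}'$, which exists since an exact cover contains at most one hyperedge through $v$. Conversely, given a monochromatic spanning arborescence, define $\mathcal{E}'=\{e\in\mathcal{E}:\text{the parent arc of }u_e\text{ is colored }1\}$. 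The parent of $x_v$ is some $u_{e_j^v}$ via a color-$1$ arc, forcing $e_j^v\in\mathcal{E}'$, so $v$ is covered at least once. The parent of each $y_v^{ij}$ is some $u_{e_k^v}$ ($k\in\{i,j\}$) via a color-$2$ arc, forcing at least one of $e_i^v,e_j^v$ to lie outside $\mathcal{E}'$; running this over all three pairs forces at most one of $e_1^v,e_2^v,e_3^v$ to lie in $\mathcal{E}'$. Combining these, $v$ is covered exactly once, and $\mathcal{E}'$ is an exact cover.

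Membership in NP is clear since a candidate spanning arborescence can be verified in polynomial time. The main design point I would emphasize is enforcing the "at most one selected" condition around each $v\in V$: a single witness vertex $y_v$ with color-$2$ in-arcs from all three $u_{e_j^v}$'s would only enforce "at least one not selected," i.e.\ "at most two selected," which is insufficient. The three pair-based witnesses $y_v^{ij}$ upgrade this to "at least two not selected" per $v$, which combined with the "at least one selected" enforced by $x_v$ pins the count to exactly one and thereby recovers the exact cover condition.
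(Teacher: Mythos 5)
Your reduction is correct and essentially the same as the paper's: both reduce from {\sc RXC3} via a two-layer acyclic digraph in which each hyperedge-vertex receives two parallel arcs of different colors from the root, color-$1$ arcs into vertex-gadgets force ``covered at least once,'' and color-$2$ arcs into witness-gadgets force ``at most one selected.'' The only cosmetic difference is that you index the witnesses by a vertex together with a pair of its incident hyperedges, whereas the paper uses one witness $w_{i,j}$ per intersecting pair of hyperedges; both give a polynomial-size instance and the same correctness argument.
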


\begin{proof}
It is clear that {\sc MoChPaSpAr} is in NP. Let us take an instance of {\sc RXC3}, that is let $\mathcal{H}$ be a 3-regular 3-uniform hypergraph. We construct a polynomial size instance $(D,c)$ of {\sc MoChPaSpAr} such that $\mathcal{H}$ has an exact  cover if and only if $(D,c)$ has a spanning $s$-arborescence containing only monochromatic directed paths. 
Since $\mathcal{H}$ is a 3-regular 3-uniform hypergraph, the number of vertices of $\mathcal{H}$ and the number of hyperedges of $\mathcal{H}$ coincide.
Let us denote the vertices of $\mathcal{H}$ by $V=\{v_1,\dots, v_h\}$ and the hyperedges of $\mathcal{H}$ by  $\mathcal{E}=\{H_1,\dots, H_h\}$. 
\medskip

Let {\boldmath $D$} $=(Z\cup s, A)$ be the directed graph where {\boldmath $Z$} $=U\cup V\cup W$ and {\boldmath $A$} $=A_1\cup A_2\cup A_3\cup A_4$ with {\boldmath $U$} $=\{u_1,\dots, u_h\},$ {\boldmath $W$} $=\{w_{i,j}: H_i\cap H_j\neq\emptyset\}$,  {\boldmath $A_1$} $=\{e_i^1=su_i: 1\le i\le h\}$, {\boldmath $A_2$} $=\{e_i^2=su_i: 1\le i\le h\}$, {\boldmath $A_3$} $=\{u_iv_j: u_i\in U, v_j\in V, v_j\in H_i\}$ and {\boldmath $A_4$} $=\{u_iw_{i,j}, u_jw_{i,j}:u_i, u_j\in U, w_{i,j}\in W\}.$ Let {\boldmath $c$}$(a)$ be equal to black if $a\in A_1\cup A_3$  and grey if $a\in A_2\cup A_4$. Note that $D$ is acyclic and $c$ uses only two colors. For an example see Figure \ref{npred}. 

\begin{figure}[h]
\hskip 1truecm	{\includegraphics[scale=.5]{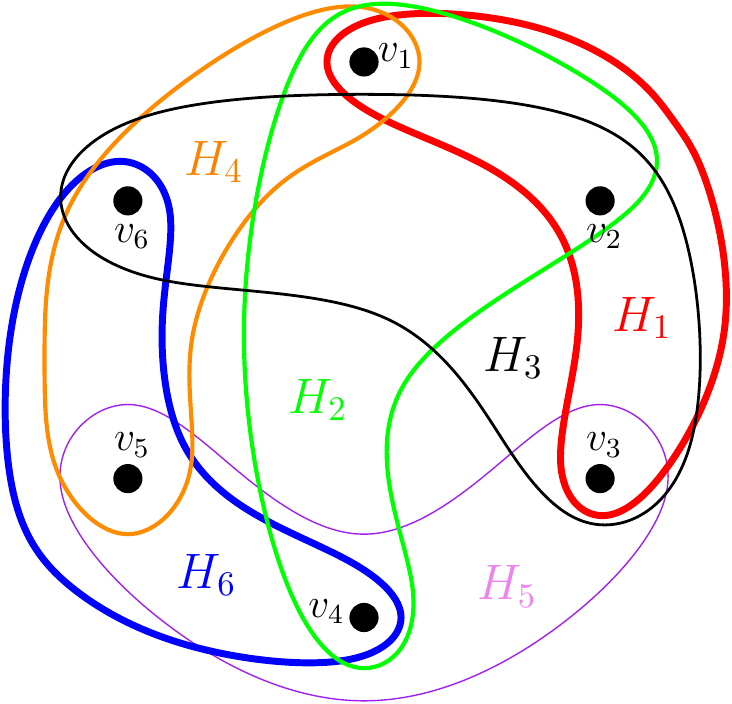}}
\hskip 2truecm	{\includegraphics[scale=.6]{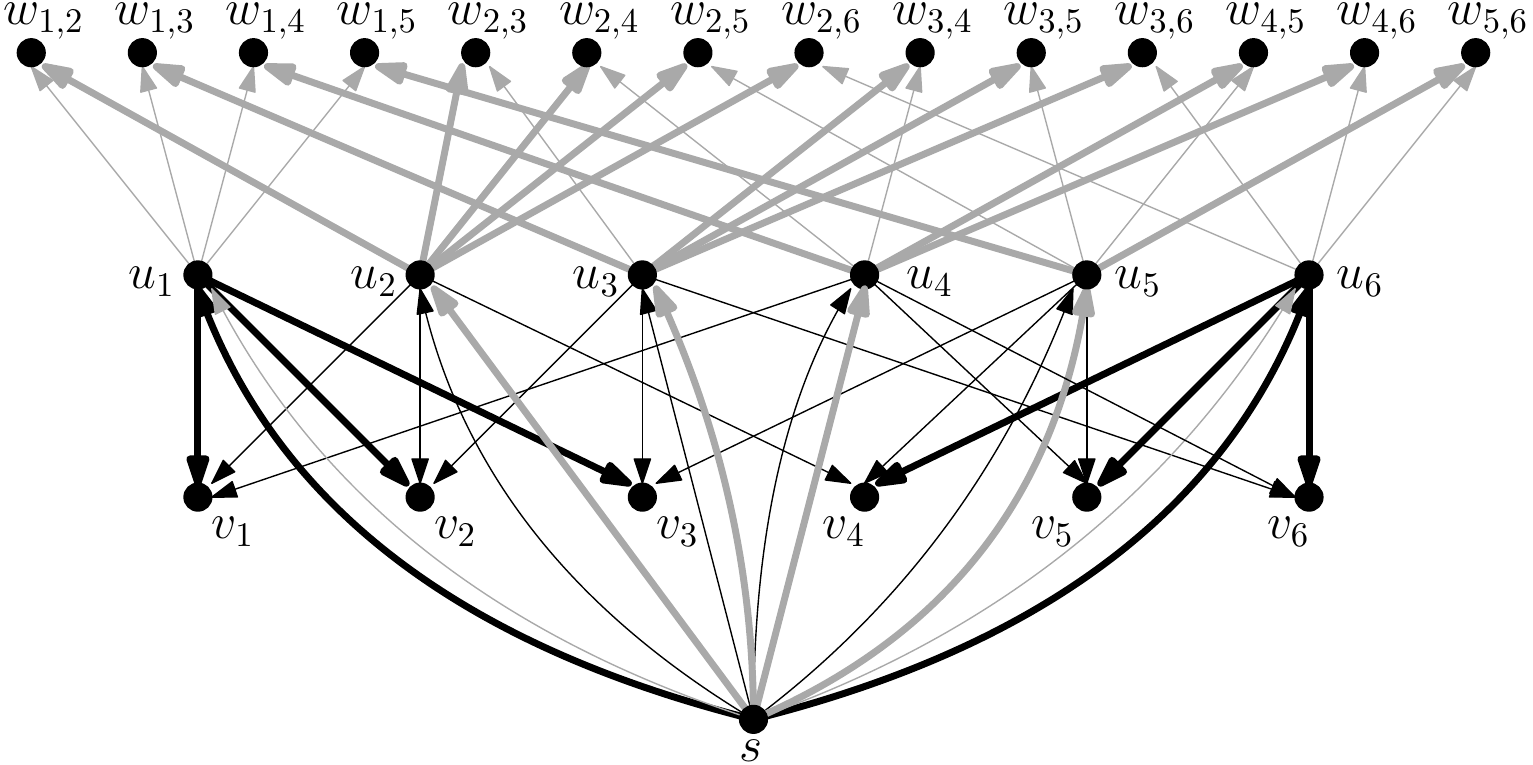}}
	\caption{A 3-regular 3-uniform hypergraph and the constructed  colored directed graph for it.}\label{npred}
\end{figure}
\medskip

The size of $D$ is polynomial in $h$. Indeed, since $\mathcal{H}$ is a 3-regular 3-uniform hypergraph, $|W|\le \frac12\cdot 3\cdot 2\cdot h$, so  $|Z\cup s|=|U|+|V|+|W|+1\le h+h+3h+1=5h+1$ and $|A|=|A_1|+|A_2|+|A_3|+|A_4|\le h+h+3h+2\cdot 3h=11h.$
\medskip

Suppose first that $\mathcal{H}$ has an exact  cover {\boldmath $\mathcal{H}'$}.
Let {\boldmath $Z'$} be the set of vertices of $D$ that can be reached from $s$ by a black directed path starting with an arc $su_i$ with $H_i\in \mathcal{H}'$ and {\boldmath $Z''$} by a grey directed path starting with an arc $su_i$ with $H_i\notin \mathcal{H}'$. Since $\mathcal{H}'$ is a cover, we have $Z'=V\cup \{u_i:H_i\in \mathcal{H}'\}.$ Since the hyperedges in $\mathcal{H}'$ are disjoint, we have $Z''=\{u_i :H_i\notin \mathcal{H}'\}\cup W.$ Since $Z'\cap Z''=s$, the desired spanning $s$-arborescence containing only monochromatic directed paths exists. In the example of Figure \ref{npred}, $\mathcal{H}'=\{H_1, H_6\},$ $Z'=V\cup \{u_1, u_6\}$, $Z''=\{u_2,u_3,u_4,u_5\}\cup W$ and the arborescence is represented by bold arcs.
\medskip

Now suppose that $(D,c)$ has a spanning $s$-arborescence {\boldmath $F$} containing only monochromatic directed paths. 
Let {\boldmath $\mathcal{H}'$} $=\{H_j:u_j\in U, v_i\in V, u_jv_i\in F\}$.
Since $F$ is a spanning $s$-arborescence, each vertex $v_i$ has exactly one black arc $u_jv_i$ in $F$ entering. This implies that $\mathcal{H}'$ covers $V.$ 
Let $H_j, H_k$ $(j<k)$ be hyperedges in $\mathcal{H}'.$ If $w_{j,k}\in W$, then, since the directed paths are monochromatic in $F$, $su_j$ and $su_k$ are black and hence  $u_jw_{j,k}$ and $u_kw_{j,k}$ are not contained in $F$ that contradicts the fact that  $F$ is a spanning $s$-arborescence. Thus $H_j$ and $H_k$ are disjoint. It follows that $\mathcal{H}'$ is an  exact  cover.
\end{proof}

\end{document}